\documentclass[sigconf,nonacm]{acmart}

\usepackage{listings}
\usepackage{algorithm}
\usepackage[noend]{algpseudocode}
\usepackage{tikz}
\usetikzlibrary{positioning,arrows.meta,fit,shapes.geometric,backgrounds,decorations.pathreplacing}
\usepackage{booktabs}

\AtEndPreamble{%
  \theoremstyle{definition}%
}

\lstdefinelanguage{3D}{
  morekeywords={entrypoint,typedef,struct,UINT8,UINT16,UINT32,BOOLEAN},
  sensitive=true,
  morecomment=[l]{//},
  morecomment=[s]{/*}{*/},
  morestring=[b]",
}

\newcommand{\TRUE}{\textsc{True}}
\newcommand{\FALSE}{\textsc{False}}

\setcopyright{none}
\title{Access Control as Verified Parse Constraints}

\author{Saranachon Iammongkol}
\email{iamfo470@student.otago.ac.nz}
\affiliation{%
  \institution{University of Otago}
  \department{School of Computing}
  \city{Dunedin}
  \country{New Zealand}
}

\author{Zhiyi Huang}
\email{zhiyi.huang@otago.ac.nz}
\affiliation{%
  \institution{University of Otago}
  \department{School of Computing}
  \city{Dunedin}
  \country{New Zealand}
}

\author{David Eyers}
\email{david.eyers@otago.ac.nz}
\affiliation{%
  \institution{University of Otago}
  \department{School of Computing}
  \city{Dunedin}
  \country{New Zealand}
}

\begin{abstract}
Commercial security gateways repeatedly ship implementation bugs in
the code path between the network and the policy decision:
hand-written enforcement logic that diverges from the policy author's
intent, and ad-hoc request parsers at the network boundary that
introduce memory-safety flaws of their own. In both cases the bug is
in the deployed enforcement code, not in the policy.
Existing approaches either leave the enforcement runtime unverified or
connect a formal model to a hand-written engine only by differential
testing~\cite{cutler_2024_cedarnewlanguage}.

Our contribution is a class result: a forward-only, backtrack-free
EverParse validator is a verified recognizer for a bounded,
finite-state class, and access-control decision functions with
fixed-offset fields and bounded disjunction belong to it, so one
machine-checked proof transfers to every policy in the class rather
than being re-established per policy. Concretely, we encode a bounded
policy language's decision function into a fixed-size byte buffer and
verify the enforcement code once---covering all byte values---with an
SMT solver, proving the validator accepts if and only if the decision
function accepts, for every policy, request, and session. Editing rule
content over a fixed endpoint set then needs no new proof; adding
endpoints reruns the toolchain; extending the language needs new
proofs. We establish faithful enforcement of a policy, not that a
policy is itself secure.

The verified gate is platform-independent, requiring only EverParse/Z3
and a C compiler, whose correctness we assume. We demonstrate a
deployment on the seL4 microkernel, which ensures every request passes
through the gate and that unverified components cannot corrupt the
verified enforcement chain.
\end{abstract}

\keywords{access control, verified parsing, EverParse, seL4, formal
verification, role-based access control, scope bitfield, SMT}

\begin{document}
\maketitle

\section{Introduction}
\label{sec:introduction}

Commercial security gateways have repeatedly shipped enforcement
bugs that share a structural pattern across vendors and years.
CVE-2024-0012~\cite{nistnationalvulnerabilitydatabase_2024_cve20240012panosauthentication}
(Palo Alto PAN-OS, unauthenticated administrative access because the
management web interface trusted an attacker-settable HTTP header as
authentication state),
CVE-2025-0108~\cite{nistnationalvulnerabilitydatabase_2025_cve20250108panosauthentication}
(Palo Alto PAN-OS, authentication bypass in the same interface because
two reverse-proxy layers disagreed on URL path normalisation),
CVE-2025-20362~\cite{nistnationalvulnerabilitydatabase_2025_cve202520362authorizationbypass}
(Cisco Secure Firewall ASA/FTD, unauthenticated access to restricted
VPN endpoints via a URL path-normalisation flaw, identified by Rapid7
as a patch bypass of CVE-2018-0296 in the same WebVPN component seven
years earlier), and
CVE-2019-11816~\cite{nistnationalvulnerabilitydatabase_2019_cve201911816incorrectaccess}
(OPNsense, privilege escalation via a flawed URL comparison) were not
caused by misconfigured policies. In each case the code path between
the network and the policy decision---whether a wrong comparator, a
trusted header, or two components disagreeing on the shape of a
URL---contained an implementation bug that caused the decision to
silently deviate from the intended policy, and patching one incident
did not prevent the next error of the same kind in the same class of
code.

The recurrence is structural. Access control enforcement is
almost always hand-written imperative code that parses
variable-length inputs, canonicalises fields, and applies rule
logic at several interfaces. Every boundary where strings, URLs,
or headers pass between components is a candidate for the
enforcement mechanism to diverge from the policy author's intent.
Section~\ref{sec:opnsense-coverage} analyses the OPNsense
access-control surface and maps every vulnerability class our
approach eliminates against its actual enforcement code.

We ask whether the translation from policy specification to deployed
enforcement code can be made correct by construction, at the cost of
a restricted policy language. The observation that makes this possible
is a class result: a forward-only, backtrack-free EverParse validator
is a verified recognizer for a bounded, finite-state class, and a
useful fragment of access control---role comparison, capability
bitfields, exact-match resources---belongs to it, because its decision
function is built from fixed-offset fields and bounded disjunction with
no backtracking. The single correctness proof therefore transfers to
every policy the fragment can express.
EverParse~\cite{ramananandro_2019_everparseverifiedsecure}, a
verified parser generator, accepts exactly this shape, and compiles
a 3D specification---EverParse's specification
language~\cite{swamy_2022_hardeningattacksurfaces}---directly into
C through a trusted toolchain, so the deployed enforcement code and the specification
are not separate artefacts.

The price is forward-only parsing: EverParse does not backtrack, which
rules out wildcard path matching and other patterns needing
speculative, variable-length interpretation at runtime---a limit of
the recognizer model, not of termination. Within that constraint the
language expresses primitives from NIST
RBAC~\cite{ferraiolo_2001_proposedniststandard}---role comparison,
permission checking via capability bitfields, and resource
matching---over a fixed number of rule slots. A single verified binary
then enforces every policy the language can express over a fixed
endpoint set: editing rule content is a runtime data change, while
adding endpoints or extending the language reruns the toolchain.
This leverage matters at scale: hand-proved verified C is
expensive---seL4's 10{,}000-line kernel rests on ${\approx}480{,}000$
lines of Isabelle/HOL
proof~\cite{klein_2014_comprehensiveformalverification}---whereas our
245-line 3D specification compiles to ${\approx}4{,}000$ lines of
verified C, Z3 discharging the obligations automatically.
This work is inspired by
Cedar~\cite{cutler_2024_cedarnewlanguage}, which links a Lean model to
a hand-written Rust engine by differential testing; we instead generate
the deployed C from the specification, so there is no engine to test
and deny-by-default is structural (\S\ref{sec:related-work}).

\paragraph{Contributions.}
\begin{enumerate}
  \item \textbf{Access-control enforcement as membership in a verified
        recognizer's class.}
        A bounded access-control fragment---fixed-offset fields,
        bounded disjunction, no backtracking---lies inside the class a
        forward-only EverParse validator provably recognizes, so one
        proof covers every policy in it. Consequently
        (a)~the representation-confusion class of bugs behind the
        gateway CVEs in \S\ref{sec:introduction} is excluded by
        construction, and (b)~deny-by-default on absent rules is a
        property of the encoding, not a runtime discipline.

  \item \textbf{A three-tier update model, and a verified path from
        HTTP to decision.}
        Because the proof is universal over the fragment, editing rule
        content over a fixed endpoint set needs no new proof, adding
        endpoints reruns the toolchain, and extending the language
        needs new proofs; a verified extraction layer carries requests
        from raw HTTP to the gate, closing the gap between the formal
        decision function and the deployed enforcement code.

  \item \textbf{A reference deployment on a verified OS that covers a
        real firewall.}
        We integrate the gate on seL4 inside a component-isolated
        architecture and demonstrate that the resulting language
        covers the complete OPNsense~19.1.7 access-control list at
        microsecond decision latency.
\end{enumerate}

\section{Overview}
\label{sec:overview}

The gateway CVEs in \S\ref{sec:introduction} share a structural
shape: each is a disagreement about how an input is read---a
header trusted when it should not have been, a URL interpreted
differently by two components, a comparator matching the wrong
substring. These are errors of reading, not errors of deciding.
If the recurring failure is in how requests are read, the right
discipline is one designed for correct reading: data format
validation. Under this reframing the access-control question
becomes ``is this byte buffer well-formed and authorised?'' The
rest of this section traces a concrete request through the
system to show how this reframing works in practice.

Figure~\ref{fig:core-idea} gives the picture: a verified
preprocessor bounds the unbounded HTTP request into a fixed-size
metadata block, and the access-control decision is the output of
a forward-only verified validator scanning this block.

\begin{figure*}[t]
\centering
\begin{tikzpicture}[
  box/.style={rectangle, draw, font=\scriptsize\ttfamily,
              inner sep=3pt, align=left},
  procbox/.style={rectangle, draw, font=\scriptsize, align=center,
                  inner sep=3pt, minimum width=1.7cm, minimum height=0.6cm},
  cell/.style={rectangle, draw, minimum height=0.75cm,
               font=\scriptsize, align=center, inner sep=2pt},
  lbl/.style={font=\scriptsize, align=center},
  arrow/.style={->, >=stealth, semithick}
]

\node[box, minimum width=3.4cm, minimum height=1.5cm]
  (http) at (0,0)
  {GET /resource HTTP/1.1\\Authorization: Bearer ..\\Accept: application/json\\...};
\node[lbl, above=2pt of http.north] {HTTP request (unbounded)};

\node[procbox, right=0.6cm of http] (proc) {verified preprocessor};

\node[procbox, minimum height=0.5cm, above=0.3cm of proc] (rules) {rules table};
\draw[arrow] (rules.south) -- (proc.north);

\node[procbox, minimum height=0.5cm, below=0.3cm of proc] (session) {session info};
\draw[arrow] (session.north) -- (proc.south);

\draw[arrow] (http.east) -- (proc.west);

\node[cell, minimum width=0.65cm, right=0.6cm of proc] (c0) {role};
\node[cell, minimum width=0.55cm, right=0pt of c0]     (c1) {rate};
\node[cell, minimum width=0.7cm, right=0pt of c1]      (c2) {scope};
\node[cell, minimum width=2.0cm, right=0pt of c2]      (c3) {rule $r_1 \ldots r_k$};
\node[cell, minimum width=0.7cm, right=0pt of c3]      (c4) {path};
\node[cell, minimum width=0.9cm, right=0pt of c4]      (c5) {method};

\draw[arrow] (proc.east) -- (c0.west);

\draw[decorate, decoration={brace, amplitude=3pt, raise=1pt}]
  (c0.north west) -- (c2.north east)
  node[midway, above=4pt, font=\scriptsize] {session};
\draw[decorate, decoration={brace, amplitude=3pt, raise=1pt}]
  (c3.north west) -- (c3.north east)
  node[midway, above=4pt, font=\scriptsize] {rules};
\draw[decorate, decoration={brace, amplitude=3pt, raise=1pt}]
  (c4.north west) -- (c5.north east)
  node[midway, above=4pt, font=\scriptsize] {request};

\node[lbl, above=18pt of c3.north] {fixed-size request metadata block (bounded)};

\node[lbl, below=0.3cm of c2.south] (head) {Verified Validator};
\draw[arrow] (head.north) -- (c2.south);
\draw[->, >=stealth, thick, densely dashed]
  (head.east) -- ++(3.8cm, 0)
  node[midway, below, font=\scriptsize] {forward-only scan, no backtrack};

\node[cell, minimum width=1.7cm, minimum height=1.0cm, right=2.6cm of c5,
      font=\scriptsize\bfseries, align=center]
  (verdict) {Access\\Decision};

\draw[arrow] (c5.east) -- (verdict.west)
  node[midway, above, font=\scriptsize] {string validation};

\end{tikzpicture}
\caption{Gate architecture.}
\label{fig:core-idea}
\end{figure*}

Suppose a small deployment protects the endpoint
\texttt{/api/policy}, mapped at compile time to the 32-bit path
identifier \texttt{0x44444444}. The policy author writes two
rules into the gate's rule table:
\begin{itemize}
  \item Rule slot 1: path \texttt{0x44444444}, method mask
        \texttt{0x03} (\texttt{GET}\,+\,\texttt{POST}), minimum
        role \textsc{admin} (2), required scope \texttt{0x0004}
        (\textsc{configure}).
  \item Rule slot 2: path \texttt{0x44444444}, method mask
        \texttt{0x01} (\texttt{GET} only), minimum role
        \textsc{operator} (1), required scope \texttt{0x0001}
        (\textsc{read\_sensors}).
\end{itemize}
The remaining slots (3--8) are filled with the inactive sentinel
$\bot = \texttt{0xDEADDEAD}$; their role will become clear
shortly.

When a request arrives, the verified extractor writes the
session's authenticated role, rate counter, and scope into fixed
byte positions, appends the rule slots from the current
in-memory policy, and appends the request's path identifier and
method byte---producing a single flat buffer whose layout is
formalised in Table~\ref{tab:buffer-layout}
(\S\ref{sec:verified-mechanism}). The validator then scans this
buffer forward-only: at each rule slot, it checks whether the
request's path and method bytes match the slot's bytes at the
corresponding offsets, whether the session's role byte is at
least the slot's minimum, and whether the required scope bits
are present in the session's scope. If any slot matches, parsing
succeeds and the request is accepted; otherwise it is rejected.

An admin (role \texttt{0x02}, scope \texttt{0x0007}) sending
\texttt{POST /api/policy} matches Rule~1: the paths are equal,
\texttt{POST} is within the method mask, the role meets the minimum
($2 \geq 2$), and the required scope bit is present---accepted. An
operator (role \texttt{0x01}, scope \texttt{0x0003}) sending the same
request is denied: Rule~1 fails on role ($1 < 2$) and Rule~2 on method
(its mask allows only \texttt{GET}). The same operator sending
\texttt{GET /api/policy} is accepted by Rule~2. The binary never
changed---only the bytes differ, and the same verified comparisons
yield different outcomes.

Suppose auditors should later read \texttt{/api/policy} with no scope
requirement. The administrator writes a third rule into slot~3 at
runtime---path \texttt{0x44444444}, method \texttt{GET}, minimum role
\textsc{operator}, scope \texttt{0x0000}---changing the slot-3 bytes
from the $\bot$ sentinel to the rule's values. This is a data edit, not
a code change: the binary is exactly what Z3 verified, and the
soundness theorem (\S\ref{sec:soundness}) covers \emph{all} byte values
at every offset, so the new rule holds from the moment its bytes are
written. Editing rule content over the compiled endpoint set is thus
already covered; adding endpoints or growing $K$ reruns the toolchain,
and extending the language needs new proofs.

The sentinel $\bot$ that filled slot~3 deserves a closer look. Two
reserved values sit outside the author's identifier space
(Assumption~A1, \S\ref{sec:domains}): path ID zero, which the extractor
rejects with HTTP~404 before the gate, and
$\bot = \texttt{0xDEADDEAD}$, which fills inactive slots. A slot holding
$\bot$ fails the path comparison against every request that reaches the
gate---structurally inert---so if \emph{all} eight slots hold $\bot$,
every request is denied. Deny-by-default is thus a consequence of the
encoding, not a runtime check a developer could forget.

We target network attackers who control all HTTP request fields;
soundness bounds accept/reject for every constructible request.
Attacker-controlled HTTP fields cannot supply identity: role and scope
enter the buffer only through the F*-verified extractor, never from the
request itself. Out of scope: compromised credentials at the target
privilege, toolchain bugs, and post-gate application behaviour.

Nine core assumptions (toolchain, byte-order, and the byte-copy glue)
are platform-independent; three more cover the seL4 isolation substrate
and can be replaced by equivalent guarantees off-seL4
(Table~\ref{tab:assumptions}). The rest of the paper formalises this
picture---the policy language, the encoding and its EverParse/Z3
verification, and soundness (\S\ref{sec:verified-mechanism}); the seL4
deployment (\S\ref{sec:reference-deployment}); evaluation
(\S\ref{sec:evaluation}); and related work
(\S\ref{sec:related-work}).

\section{Verified Enforcement Mechanism}
\label{sec:verified-mechanism}

\subsection{Policy Language}
\label{sec:policy-language}

\subsubsection{Domains}
\label{sec:domains}

The gate operates on fixed-size security metadata produced by the
F*-verified extractor (\S\ref{sec:reference-deployment}): a resource
identifier, a method, a role level, a scope bitfield, and a rate
counter. The resource identifier is a 32-bit integer representing a
known URL path, assigned by the policy author at compile time; the
extractor maps incoming paths to identifiers by exact string
comparison and rejects unknown paths with HTTP~404 before the gate
is reached. Two reserved values are excluded from the author's
identifier space (Assumption~A1): zero marks an unrecognised
request at the extractor layer and never reaches the gate; the
sentinel $\bot = \texttt{0xDEADDEAD}$ (\texttt{INACTIVE\_PATH\_ID})
fills unused rule slots so inactive rules cannot match any valid
request. Wherever $\bot$ appears below it refers to this rule-slot
sentinel, not to the extractor zero value.
A 32-bit identifier space accommodates real-world
deployments---the OPNsense firewall ACL we evaluate in
\S\ref{sec:evaluation} defines 420 distinct paths.

All domains are configurable at compile time.
Table~\ref{tab:domains} summarises their general form;
Table~\ref{tab:demo-config} lists the concrete values used in our
HTTP demonstration deployment.
The same model accommodates other request-response protocols
(e.g.\ the Constrained Application Protocol (CoAP) or the MODBUS
industrial fieldbus) by reassigning the method and resource domains.

\begin{table}[t]
\centering\small
\caption{Policy language domains (general form). All domains are
         configurable at compile time.}
\label{tab:domains}
\begin{tabular}{@{}p{0.8cm}p{2.4cm}p{4.2cm}@{}}
\toprule
Symbol & Domain & Description \\
\midrule
$\mathbf{P}$
  & 32-bit identifiers\newline
    (0 and \texttt{0xDEADDEAD}\newline prohibited)
  & Resource identifiers, fixed at\newline compile time; reserved values\newline
    serve as rejection and inactive-rule\newline sentinels respectively \\[4pt]
$\mathbf{M}$
  & $\{0, \ldots, \texttt{0x3F}\}$
  & 6-bit operation bitmask;\newline
    each bit is a protocol operation;\newline
    a rule's mask may combine operations \\[4pt]
$\mathbf{R}$
  & $\{0, \ldots, R_{\max}\}$
  & Role levels, totally ordered \\[4pt]
$\mathbf{N}$
  & $\{0, \ldots, N_{\max}\!-\!1\}$
  & Rate counter;\newline
    enables inequality constraints,\newline
    e.g.\ $c < N_{\max}$ \\[4pt]
$\Sigma$
  & $\{0, \ldots, 2^{16}\!-\!1\}$
  & Scope bitfield (16-bit);\newline
    each bit is an independent capability;\newline
    distinguishes permissions within the\newline
    same role \\
\bottomrule
\end{tabular}
\end{table}

\begin{table}[t]
\centering\small
\caption{Demonstration deployment configuration (HTTP).
         Values defined in \texttt{path\_ids.h} and
         \texttt{HTTP.Extract.Types.fst}.}
\label{tab:demo-config}
\begin{tabular}{@{}p{1.6cm}p{5.8cm}@{}}
\toprule
Domain & Configuration \\
\midrule
Paths ($\mathbf{P}$)
  & \texttt{/api/login}$\to$\texttt{0x11111111},\newline
    \texttt{/api/logout}$\to$\texttt{0x22222222},\newline
    \texttt{/api/status}$\to$\texttt{0x33333333},\newline
    \texttt{/api/policy}$\to$\texttt{0x44444444} \\[4pt]
Methods ($\mathbf{M}$)
  & \texttt{GET}=\texttt{0x01},
    \texttt{POST}=\texttt{0x02},\newline
    \texttt{PUT}=\texttt{0x04},
    \texttt{DELETE}=\texttt{0x08},\newline
    \texttt{PATCH}=\texttt{0x10},
    \texttt{HEAD}=\texttt{0x20} \\[4pt]
Roles ($\mathbf{R}$)
  & \texttt{ROLE\_NONE}=0,
    \texttt{ROLE\_OPERATOR}=1,\newline
    \texttt{ROLE\_ADMIN}=2
    \quad($R_{\max}=2$) \\[4pt]
Rate ($\mathbf{N}$)
  & $N_{\max} = 50$ \\[4pt]
Scope ($\Sigma$)
  & bit\,0: \texttt{SCOPE\_READ\_SENSORS},\newline
    bit\,1: \texttt{SCOPE\_WRITE\_SENSORS},\newline
    bit\,2: \texttt{SCOPE\_CONFIGURE}\newline
    (bits 3--15 reserved) \\
\bottomrule
\end{tabular}
\end{table}

A request carries exactly one method bit; unknown or malformed
operations map to zero and cannot match any rule.

\subsubsection{Request and Session State}
\label{sec:request-session}

The gate evaluates two inputs at decision time: the \emph{request}
describing what is being asked, and the \emph{session state} describing
who is asking.

\begin{definition}[Request]
\label{def:request}
A request is a pair
\[
  q \;=\; (p_{\mathrm{req}},\; m_{\mathrm{req}}) \;\in\; \mathbf{P} \times \mathbf{M}
\]
where $p_{\mathrm{req}} \in \mathbf{P}$ is the \textbf{requested path identifier} and
$m_{\mathrm{req}} \in \mathbf{M}$ is the \textbf{requested method}.
\end{definition}

\begin{definition}[Session state]
\label{def:session-state}
A session state is a triple
\[
  s \;=\; (\alpha,\; c,\; \sigma)
  \;\in\; \mathbf{R} \times \mathbf{N} \times \Sigma
\]
where:
\begin{itemize}
  \item $\alpha \in \mathbf{R}$ is the \textbf{authenticated role}---the
        trust level assigned to the user at login
  \item $c \in \mathbf{N}$ is the \textbf{rate counter}---the number of
        requests made by this session in the current window
  \item $\sigma \in \Sigma$ is the \textbf{authenticated scope}---the
        capability bits granted to this session at authentication time
\end{itemize}
\end{definition}

The role is totally ordered, so rules express minimum-level checks
($\alpha \geq \rho_i$, where $\rho_i$ is the minimum role required by
rule~$r_i$, defined below).
The scope bitfield provides fine-grained capability delegation within
a role level: sessions at the same role may hold different capabilities.

Each session carries a single active role determined at authentication
time, with a per-session scope bitfield for capability-level
differentiation that a role ordering alone does not capture.
The language captures a subset of the operations found in role-based
access control systems~\cite{ferraiolo_2001_proposedniststandard}:
role comparison (as a total order rather than the partial order of
NIST RBAC), permission checking (via scope bitfields), and resource
matching. It does not model role hierarchies (lattices), multi-role
sessions, or separation of duty constraints; extending to these
features is discussed in \S\ref{sec:conclusion}.
This two-dimensional design enables the fixed-size encoding required
for universal verification (\S\ref{sec:encoding}).

Session state is treated as trusted input; its correctness---that the
right role and scope are assigned at login---is assumed.
The seL4 deployment (\S\ref{sec:architecture}) demonstrates one
practical approach: the authenticator is unverified but isolated, so a
bug may assign incorrect values but cannot corrupt the policy gate.

\subsubsection{Access Control Rules and Policy}
\label{sec:rules}

An access control rule specifies the conditions under which a request is
permitted.

\begin{definition}[Rule]
\label{def:rule}
An access control rule is a 4-tuple
\[
  r \;=\; (p,\; m,\; \rho,\; \sigma)
  \;\in\; \mathbf{P} \times \mathbf{M} \times \mathbf{R} \times \Sigma
\]
where $p$ is the \textbf{path identifier}, $m$ is the \textbf{method
mask}, $\rho$ is the \textbf{minimum role}, and $\sigma$ is the
\textbf{required scope}.
How each field is matched against the request and session is defined in
\S\ref{sec:decision} (Table~\ref{tab:rule-layout}).
\end{definition}

A policy is a fixed-size tuple of rules.
The fixed-size constraint is a design requirement of the verification
framework: the 3D specification defines a statically known buffer
structure.
Variable numbers of active rules are accommodated by padding unused
slots with \emph{inactive} rules whose path is set to the
$\bot$ sentinel (\S\ref{sec:domains}).

\begin{definition}[Policy]
\label{def:policy}
A policy is a tuple $\pi = (r_1, r_2, \ldots, r_K)$ of exactly $K$ rules,
where $K$ is fixed at compile time ($K = 8$ in our implementation).
A rule $r_i$ is \emph{active} if $p_i \neq \bot$, and \emph{inactive}
if $p_i = \bot$.
The F*-verified extractor guarantees $p_{\mathrm{req}} \neq \bot$
for every request, so inactive rules can never match.
\end{definition}

\subsubsection{Policy Decision Function}
\label{sec:decision}

Given a policy $\pi$ (Definition~\ref{def:policy}), a request $q$
(Definition~\ref{def:request}), and a session state $s$
(Definition~\ref{def:session-state}), the gate must decide whether to
accept or deny the request.
We write $a \subseteq b$ for \emph{bitwise subset}, i.e.,
$b \mathbin{\&} a = a$ (every bit set in $a$ is also set in $b$).

\begin{definition}[Rate counter update]
\label{def:state-transition}
On each authenticated request, the session counter is updated
unconditionally before the policy decision:
\begin{equation}
c' \;=\; \begin{cases}
  c + 1 & \text{if } \Delta t < T_{\mathrm{reset}} \\
  1     & \text{otherwise (window expired, counter resets)}
\end{cases}
\end{equation}
In the demonstration deployment, $T_{\mathrm{reset}} = 60$\,s.
The rate constraint reflects standard API practice for mitigating
brute-force and resource exhaustion.
\end{definition}

\begin{definition}[Policy decision]
\label{def:policy-decision}
After the counter update (Definition~\ref{def:state-transition}),
a request is accepted if and only if the rate constraint holds
over the updated counter $c'$ and at least one rule matches.
The per-rule match function over the request $q$, session $s$,
and rule $r_i$ is:
\begin{equation}
\label{eq:match}
\begin{split}
f(q,\; s,\; r_i) \;\triangleq\;\;
  &\underbrace{(p_{\mathrm{req}} \;=\; p_i)}_{\text{path}} \;\wedge\;
   \underbrace{(m_{\mathrm{req}} \;\subseteq\; m_i)}_{\text{method}} \\
  \wedge\;\;
  &\underbrace{(\alpha \;\geq\; \rho_i)}_{\text{role}} \;\wedge\;
   \underbrace{(\sigma_i \;\subseteq\; \sigma)}_{\text{scope}}
\end{split}
\end{equation}
The request is accepted when:
\begin{equation}
\label{eq:decision}
D_{\mathrm{accept}}(\pi,\; q,\; s) \;\triangleq\;
  \underbrace{(c' < N_{\max})}_{\text{rate (universal)}} \;\wedge\;
  \underbrace{\bigvee_{i=1}^{K}
    f(q,\; s,\; r_i)}_{\text{access (existential)}}
\end{equation}
\end{definition}

\begin{table}[t]
\centering\small
\caption{Policy decision structure. The rate check is universal
         (applied once); the remaining checks are per-rule
         (any matching rule $r_i$ accepts).}
\label{tab:rule-layout}
\begin{tabular}{@{}lccc@{}}
\toprule
Check & Request/Session & & Rule/Bound \\
\midrule
Rate   & $c$                  & $<$        & $N_{\max}$ \\
Path   & $p_{\mathrm{req}}$   & $=$        & $p_i$ \\
Method & $m_{\mathrm{req}}$   & $\subseteq$ & $m_i$ \\
Role   & $\alpha$             & $\geq$     & $\rho_i$ \\
Scope  & $\sigma$             & $\subseteq$ & $\sigma_i$ \\
\bottomrule
\end{tabular}
\end{table}

The counter is incremented on every authenticated request, regardless
of whether the request is ultimately accepted or denied; denied
requests still count toward the rate limit.
The reset window $T_{\mathrm{reset}}$ (a time, $60$\,s here) and the
count threshold $N_{\max}$ are independent quantities in different
units, with no ordering between them.

The decision function $D_{\mathrm{accept}}$ is defined independently of any particular
solver or encoding.
The encoding of $D_{\mathrm{accept}}$ as verified parse constraints is addressed in
\S\ref{sec:encoding}.

Section~\ref{sec:soundness} proves that the encoding and implementation
preserve the decision: the verified C code accepts exactly the requests
that $D_{\mathrm{accept}}(\pi, q, s)$ admits.
\subsection{Encoding and Verification}
\label{sec:encoding}

\subsubsection{Definition}

We define an encoding function that translates the policy decision
inputs into a flat byte buffer matching the
\texttt{\_Access\-Request} struct in the EverParse 3D specification
(\texttt{Rbac\-Policy.3d}, 245~lines; included in the artefact).

\begin{definition}[Encoding function]
\label{def:encoding}
The encoding function for the AccessRequest validator packs the policy
$\pi$ (Definition~\ref{def:policy}), request $q$
(Definition~\ref{def:request}), and session state $s$
(Definition~\ref{def:session-state}) into a single buffer:
\begin{equation}
E_{\mathrm{access}}(\pi, q, s)
  \;\longrightarrow\; \mathbb{B}^{75}
\end{equation}
The 75-byte buffer is structured as shown in Table~\ref{tab:buffer-layout}.
Each of the $K = 8$ rule slots occupies 8 bytes:
4 bytes for the path identifier
(little-endian), 1 byte for the HTTP method, 1 byte for the minimum role, and
2 bytes for the required scope (little-endian).
Unused rule slots are filled with inactive rules ($p_i = \bot$).
The two trailing bytes (offsets 73--74) are \emph{phantom bytes}---fields
whose value is irrelevant but whose EverParse where-clause (defined in
\S\ref{sec:encoding-structure}) anchors a logical constraint.
\end{definition}

The buffer is constructed entirely within the policy gate's address
space and is never transmitted over the network (PolicyGate is the
component that hosts the gate in the seL4 deployment described in
\S\ref{sec:architecture}). The practical ceiling on rule count is Z3 compile-time tractability
(\S\ref{sec:partitioning}), not any wire-format constraint.

\begin{table}[t]
\centering\small
\caption{Buffer layout for $E_{\mathrm{access}}$ (75 bytes).}
\label{tab:buffer-layout}
\begin{tabular}{@{}p{0.9cm}p{0.7cm}p{2.5cm}p{3.4cm}@{}}
\toprule
Offset & Bytes & Content & Description \\
\midrule
0      & 1  & \texttt{auth\_state}    & $\alpha$: role level \\
1      & 1  & \texttt{rate\_count}    & $c$: requests so far \\
2--3   & 2  & \texttt{auth\_scope}    & $\sigma$ (little-endian) \\
4--67  & 64 & 8 rules $\times$ 8 bytes & Rules $(r_1, \ldots, r_K)$ \\
68--71 & 4  & \texttt{req\_path\_id}  & $p_{\mathrm{req}}$ (little-endian) \\
72     & 1  & \texttt{req\_method}    & $m_{\mathrm{req}}$ \\
73     & 1  & \texttt{\_rate\_ok}     & Phantom (rate constraint) \\
74     & 1  & \texttt{\_access\_ok}   & Phantom (access constraint) \\
\bottomrule
\end{tabular}
\end{table}

The encoding is \textbf{total}---every valid $(\pi, q, s)$ triple
produces a buffer, so the gate never fails to decide.
It is also \textbf{injective}---distinct inputs produce distinct
buffers, so no two requests with different policy outcomes can be
confused by the validator.

\subsubsection{Structural Choices: DNF for Existential, CNF for Universal}
\label{sec:encoding-structure}

The decision function $D_{\mathrm{accept}}$ (Definition~\ref{def:policy-decision}) is
encoding-agnostic: it specifies \emph{what} the gate must decide, not
\emph{how} the decision is represented.
This section explains the structural choices made in mapping $D_{\mathrm{accept}}$ to
EverParse where-clause constraints, and why these structural choices
are required for correct fail-safe behaviour.

\paragraph{EverParse where-clauses.}

In EverParse 3D, each struct field may carry a \emph{where-clause}---a
boolean constraint that must hold for the buffer to be accepted.
For example, \texttt{UINT8 \_rate\_ok \{ rate\_count < 50 \}} declares
a \emph{phantom byte}---a field whose value is irrelevant (set to zero
by convention) but whose where-clause anchors the constraint
\texttt{rate\_count < 50}.
At compile time, Z3 verifies that the generated C parser correctly
enforces each where-clause; at runtime, the parser returns TRUE only
if all constraints hold.
The generated C code contains no conditional branches for constraint
evaluation.
In our encoding, bytes 73--74 are phantom bytes anchoring the rate
and access constraints respectively
(Table~\ref{tab:buffer-layout}).
A single phantom byte with a combined where-clause would be
functionally equivalent, but separating the rate constraint (CNF)
from the access constraint (DNF) keeps the two proof obligations
independent and preserves the fail-safe decomposition: the two
constraints have opposite default behaviours (absence of a rate
limit permits; absence of a matching rule denies).

The match function $f$ (Equation~\ref{eq:match}) decomposes naturally
into four where-clauses, and the two $\subseteq$ conditions expand to
their bitwise AND form in the encoding:

\begin{itemize}
  \item $p_i = p_{\mathrm{req}}$ --- equality over 32-bit integers
  \item $m_i \mathbin{\&} m_{\mathrm{req}} = m_{\mathrm{req}}$ ---
        $m_{\mathrm{req}} \subseteq m_i$ as bitvector
  \item $\alpha \geq \rho_i$ --- linear integer arithmetic
  \item $\sigma \mathbin{\&} \sigma_i = \sigma_i$ ---
        $\sigma_i \subseteq \sigma$ as bitvector
\end{itemize}

\paragraph{Design choices.}

The encoding uses DNF for existential constraints and CNF for
universal constraints. This choice ensures fail-safe behaviour
when policy data is incomplete.

\textbf{DNF for existential constraints.}
An existential constraint accepts if \emph{any} disjunct is true.
Under DNF, each rule slot corresponds to one disjunct; a slot whose
path is $\bot$ contributes $\FALSE$
(Corollary~\ref{cor:structural-deny}) and is therefore \emph{inert}:
it cannot grant access regardless of its remaining fields.
Policy updates replace individual rule slots at runtime; a slot that
has not yet been written contains $\bot$ by construction and defaults
to the safe state---denial---without any additional logic.

\textbf{CNF for universal constraints.}
A universal constraint requires \emph{all} conjuncts to hold.
Under CNF, an absent restriction contributes $\TRUE$, which is
safe---it does not block traffic.
The rate constraint uses this form: absence of a rate limit permits traffic.

\subsubsection{Implementation and Encoding Pipeline}

\begin{algorithm}[t]
\caption{Three-stage encoding pipeline. Every operation is a byte
         copy---no arithmetic or conditional logic.}
\label{obs:pure-copy}
\begin{algorithmic}[1]
\Require Raw HTTP request, Authenticator response, RateLimiter response
\Ensure AccessRequest buffer $\in \mathbb{B}^{75}$
\Statex
\State \textbf{Stage 1} --- $E_{\mathrm{HTTP}}$ (F*-verified):
\State \quad Raw HTTP $\to$ SecurityParamsWire (6 fields)
\Statex
\State \textbf{Stage 2} --- $E_{\mathrm{IPC}}$ (F*-verified):
\State \quad Authenticator + RateLimiter responses $\to$ SecurityParamsWire (5 fields)
\Statex
\State \textbf{Stage 3} --- $E_{\mathrm{access}}$ (unverified, auditable):
\State \quad SecurityParamsWire $\to$ AccessRequest buffer
\end{algorithmic}
\end{algorithm}

Stage~3 is a straight-line block of unconditional byte copies
implemented in
\texttt{PolicyGate/\allowbreak control\_pipeline.c}.
Direct assignments place SecurityParamsWire fields at the
AccessRequest offsets defined in Table~\ref{tab:buffer-layout}; a
\texttt{write\_rule} helper emits each rule slot's four fields
(path, method, minimum role, required scope) at 8-byte offsets
within the rule region; little-endian helpers
(\texttt{write\_u32}, \texttt{write\_u16}) handle multi-byte
fields. There are no conditionals, no dynamic allocation, and no
pointer arithmetic beyond the fixed offsets in
Table~\ref{tab:buffer-layout}. Correctness is verifiable by
inspection against the same table, and the complete source of
Stage~3 is included in the open-science artefact; its behaviour
is confirmed by 46 production tests
(\S\ref{sec:empirical-validation}).

A second encoding function (PolicyBlob, 70~bytes) handles
updates made at runtime to the policy itself; new requests are
blocked until the update completes (PolicyGate is
single-threaded). The full update flow and its verification
status are described in \S\ref{sec:architecture}.

\subsubsection{Verified Parse Function}
\label{sec:verified-parse}

\paragraph{The EverParse Toolchain.}

EverParse 3D is a verified parser generator~\cite{ramananandro_2019_everparseverifiedsecure}.
Table~\ref{tab:everparse-pipeline} shows the toolchain pipeline.

\begin{table}[t]
\centering\small
\caption{EverParse 3D compilation pipeline.}
\label{tab:everparse-pipeline}
\begin{tabular}{@{}cl@{}}
\toprule
\textbf{Stage} & \textbf{Description} \\
\midrule
\texttt{RbacPolicy.3d} & 245 lines, 3D specification \\
$\downarrow$ & EverParse compiler (3D $\to$ F*) \\
F* modules & F* type checker (verifies generated F*) \\
$\downarrow$ & Z3 SMT solver (discharges proof obligations) \\
$\downarrow$ & KreMLin extractor (F* $\to$ C) \\
\texttt{RbacPolicy.c} & 3{,}966 lines, generated C \\
\bottomrule
\end{tabular}
\end{table}

At each stage, the tool either succeeds (producing output for the next stage)
or fails with an error. If the pipeline completes without error, the generated
C code is proven correct with respect to the 3D specification
(\S\ref{sec:encoding}).

\paragraph{What Z3 Proves.}

The F* type checker translates the 3D specification into verification
conditions---logical formulas that must hold for the generated code to be
correct. These conditions are discharged by the Z3 SMT (Satisfiability Modulo
Theories) solver~\cite{demoura_2008_z3efficientsmt}: no execution of
the generated code can violate the specification.

The proof obligations discharged by Z3 can be stated as follows. Let
$V_{\mathrm{access}}$ denote the generated C function
\texttt{RbacPolicy\-Check\-Access\-Request()}. Let $C_{\mathrm{access}}$
denote the set of constraints in the \texttt{\_Access\-Request} struct of the
3D specification.

\begin{theorem}[EverParse correctness, from~\cite{ramananandro_2019_everparseverifiedsecure}]
\label{thm:everparse}
For all byte buffers $b \in \mathbb{B}^*$ and lengths $\ell \in \mathbb{N}$:
\begin{equation}
\begin{split}
V_{\mathrm{access}}(b, \ell) = \TRUE \;\;\iff\;\;
  &|b| \geq \ell \;\wedge \\
  &\text{fields of } b[0..\ell) \text{ satisfy } C_{\mathrm{access}}
\end{split}
\end{equation}
Moreover, $V_{\mathrm{access}}$:
\begin{enumerate}
  \item Is \textbf{memory-safe}: accesses only bytes within $b[0..\ell)$.
  \item Has \textbf{no integer overflow}: all arithmetic is within bounds.
  \item \textbf{Terminates}: the parser makes progress on every field.
\end{enumerate}
\end{theorem}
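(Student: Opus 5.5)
This theorem is imported from the EverParse metatheory rather than proved afresh here, so the plan is to show that the general guarantee of~\cite{ramananandro_2019_everparseverifiedsecure} specializes to the \texttt{\_AccessRequest} struct. The argument has two layers. First, the EverParse compiler translates the 3D struct into an F* parser specification, a total function from byte sequences to an optional (value, consumed-length) pair, together with a validator whose F* type states that it returns success exactly when the specification parser succeeds on the input prefix. Second, F* type-checking with Z3 discharging the obligations, followed by KreMLin extraction, transfers that typed guarantee to the C function $V_{\mathrm{access}}$. Our task is to check that each hypothesis of the metatheorem holds for this struct, and to read the \emph{iff} off the validator's type.

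The key steps, in order, are as follows.

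\begin{enumerate}
  \item \textbf{Fixed-size leaves.} Establish by structural induction over the struct's field list that every field is a fixed-size integer leaf. These are \texttt{UINT8} and little-endian \texttt{UINT16}/\texttt{UINT32}, consistent with Table~\ref{tab:buffer-layout}, and each optionally carries a where-clause. The specification parser is then the sequential (dependent-pair) composition of leaf parsers, each refined by its where-clause. Its success condition unfolds to the statement that enough bytes are available and that every refinement predicate, evaluated on the decoded field values, is true. This is precisely ``fields of $b[0..\ell)$ satisfy $C_{\mathrm{access}}$''.
  \item \textbf{The \emph{iff}.} By the combinator lemmas of EverParse, the validator for a sequential composition succeeds iff both components succeed on consecutive slices. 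Likewise, the validator for a refined leaf succeeds iff the leaf reads and the predicate holds. Induction on the field list then yields the stated equivalence.
  \item \textbf{Memory safety and absence of overflow.} These are obligations embedded in the Low* types of the validator. Every read is guarded by a position-plus-width $\le \ell$ check, and the position arithmetic is bounded by 75. For the where-clauses, the arithmetic is limited to comparisons and bitwise \texttt{\&} on machine integers, which cannot overflow. Z3 discharges each such obligation, and we note that the struct's constraints stay within the fragment EverParse's arithmetic-safety checker accepts.
  \item \textbf{Termination.} Termination is structural, since there are finitely many fields and each consumes at least one byte.
\end{enumerate}

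The main obstacle is the gap between the F*-level theorem and the extracted C function named $V_{\mathrm{access}}$. Closing it requires the trusted-toolchain assumption that KreMLin compilation and the C compiler preserve Low* semantics, and we would state this explicitly as an assumption rather than prove it. A secondary subtlety is that the displayed condition $|b|\ge\ell$ must be read as the caller's precondition that $b$ is a live buffer of length at least $\ell$. We would therefore phrase the theorem under that precondition and record it in Table~\ref{tab:assumptions} alongside the toolchain assumption.
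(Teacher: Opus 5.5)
Your proposal is correct and takes essentially the paper's route. The paper gives no independent proof: it imports the guarantee from EverParse, attributes it to the F* type system plus Z3, and rests it on trust assumptions A3 and A4, and your layered sketch (refined leaf parsers, combinator lemmas, Low* obligations, extraction) is a more explicit account of that same appeal. Your reading of $|b|\ge\ell$ as a caller precondition, rather than something the C validator can check, and your explicit listing of KreMLin as trusted are sound tightenings of the statement; note also that step~1 silently trusts the 3D front end's translation of the where-clauses into F* refinements, which the paper covers under A3.
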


These properties are proven by the F* type system and Z3 together:
the type system enforces memory safety and termination through dependent
types; Z3 discharges the arithmetic and logical constraints.

The critical observation is the \textbf{universal quantification}: the
correctness theorem holds \textbf{for all} byte buffers~$b$. Z3 does not prove
correctness for specific policy values or specific requests. It proves
correctness for the validator \emph{mechanism} across all possible inputs.


\paragraph{Expanding $C_{\mathrm{access}}$.}

The constraints $C_{\mathrm{access}}$, extracted from the 3D specification, are
a formula over byte values at fixed offsets. Each per-rule predicate
$\mathrm{Match}_i(b)$ is a 4-way conjunction over byte comparisons at
fixed offsets, where
$\mathrm{u32le}(b, k)$ and $\mathrm{u16le}(b, k)$ denote little-endian
multi-byte reads:
\begin{equation}
\label{eq:match-i}
\mathrm{Match}_i(b) \;\triangleq\;
  P_i(b) \,\wedge\, M_i(b) \,\wedge\, R_i(b) \,\wedge\, S_i(b)
\end{equation}

The full constraint is:
\begin{equation}
\label{eq:c-access}
C_{\mathrm{access}}(b) \;\triangleq\;
  \underbrace{(b[1] < N_{\max})}_{\text{rate ok}} \;\wedge\;
  \underbrace{\bigvee_{i=0}^{K-1} \mathrm{Match}_i(b)}_{\text{at least one rule matches}}
\end{equation}

$\mathrm{Match}_i(b)$ is the byte-level analogue of the match function
$f(q, s, r_i)$, and $C_{\mathrm{access}}$ is the byte-level analogue
of~$D_{\mathrm{accept}}(\pi, q, s)$.

\subsection{Soundness}
\label{sec:soundness}

\subsubsection{Statement}

The main result connects the policy decision function
$D_{\mathrm{accept}}$ (Definition~\ref{def:policy-decision}) to the
verified parse function $V_{\mathrm{access}}$
(\S\ref{sec:verified-parse}) via the encoding
$E_{\mathrm{access}}$ (Definition~\ref{def:encoding}).

\begin{theorem}[Encoding soundness and completeness]
\label{thm:encoding}
For all policies $\pi$, requests $q$, and session states $s$:
\begin{equation}
V_{\mathrm{access}}\big(E_{\mathrm{access}}(\pi, q, s)\big) = \TRUE
  \quad\iff\quad D_{\mathrm{accept}}(\pi, q, s)
\end{equation}
\end{theorem}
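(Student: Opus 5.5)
The plan is to factor the equivalence through the byte-level constraint $C_{\mathrm{access}}$ and show it in two steps. The first step reduces the claim to a statement about $C_{\mathrm{access}}$. The buffer $b = E_{\mathrm{access}}(\pi,q,s)$ always has length exactly 75, and the validator is invoked with $\ell = 75$. Theorem~\ref{thm:everparse} therefore gives $V_{\mathrm{access}}(b) = \TRUE \iff C_{\mathrm{access}}(b)$, since the length conjunct $|b| \geq \ell$ holds trivially. After this reduction the only remaining work is to prove
\[
C_{\mathrm{access}}\big(E_{\mathrm{access}}(\pi,q,s)\big) \iff D_{\mathrm{accept}}(\pi,q,s).
\]
This is a pure statement about the encoding and involves no solver or parser semantics.

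The second step is a field-recovery lemma, proved by reading off Table~\ref{tab:buffer-layout}. For $b = E_{\mathrm{access}}(\pi,q,s)$ I would establish the following identities:
\begin{itemize}
\item $b[0] = \alpha$, $b[1] = c'$ (the updated counter that the extractor writes), and $\mathrm{u16le}(b,2) = \sigma$;
\item $\mathrm{u32le}(b,68) = p_{\mathrm{req}}$ and $b[72] = m_{\mathrm{req}}$;
\item for each slot $i$, the reads $\mathrm{u32le}(b,4+8i)$, $b[8+8i]$, $b[9+8i]$ and $\mathrm{u16le}(b,10+8i)$ return $p_{i+1}, m_{i+1}, \rho_{i+1}, \sigma_{i+1}$ respectively.
\end{itemize}
Each identity is a little-endian write/read round-trip. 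It depends on the domains fitting in their byte widths: $\mathbf{M}$ is 6-bit, $\mathbf{R}$ fits in a byte, $N_{\max}=50$ fits in a byte, and $\Sigma$ is 16-bit. This is also exactly the content of the injectivity remark.

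The third step goes conjunct by conjunct. Substituting the recovered fields into Equation~\eqref{eq:c-access}, the rate conjunct $b[1] < N_{\max}$ becomes $c' < N_{\max}$. For each $i$, $\mathrm{Match}_i(b)$ becomes $f(q,s,r_{i+1})$. The path and role conjuncts map directly. The two bitwise conjuncts need a short lemma: $m_i \mathbin{\&} m_{\mathrm{req}} = m_{\mathrm{req}}$ holds iff $m_{\mathrm{req}} \subseteq m_i$, and likewise for scope. This follows from the definition of $\subseteq$ once we check that the byte-level AND computed on the zero-extended fields agrees with the mathematical one. Reindexing the disjunction from $0..K-1$ to $1..K$ then finishes the argument. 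Inactive slots need no special handling because they are just disjuncts. Their inertness (the claim $p_{\mathrm{req}} \neq \bot$) matters for Corollary~\ref{cor:structural-deny}, not for this equivalence.

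I expect the main obstacle to be the second step, together with the precise meaning of "fields of $b$ satisfy $C_{\mathrm{access}}$". It must be made rigorous that the 3D struct's parsed field values coincide with the fixed-offset reads I use. That requires a fixed struct layout with no padding or variable-length fields, and little-endian declarations matching $\mathrm{write\_u32}/\mathrm{write\_u16}$. It also requires the phantom bytes, whose values are arbitrary, to impose no constraint other than their where-clauses. Since the Stage~3 encoder is unverified, I would state this step as relying on the byte-order and byte-copy assumptions, and discharge the remainder by inspection against Table~\ref{tab:buffer-layout}. A second subtlety is that $D_{\mathrm{accept}}$ refers to $c'$ while the buffer field is labelled $c$. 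I would fix the convention that $E_{\mathrm{access}}$ writes the post-update counter, so that both sides refer to the same quantity.
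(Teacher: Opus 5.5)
Your proposal is correct and takes essentially the paper's approach. Like the paper, you compose the EverParse correctness theorem, giving $V_{\mathrm{access}}(b) = \TRUE \iff C_{\mathrm{access}}(b)$, with a direct substitution of the fixed buffer layout into $C_{\mathrm{access}}$, conjunct by conjunct, and then reindex the disjunction. Your extra care tightens points the paper leaves implicit: you discharge $|b| \geq \ell$ with $\ell = 75$, check that each domain fits its byte width, and fix that $b[1]$ holds the post-update counter $c'$ (the paper's proof writes $b[1] = c$ yet concludes $c' < N_{\max}$).
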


The left-to-right direction (soundness) ensures no false acceptance:
the system never grants access it should not. The right-to-left direction
(completeness) ensures no false rejection: the system never denies access
it should not.

\subsubsection{Proof}

The proof proceeds by showing that the constraints $C_{\mathrm{access}}$, when
applied to the buffer $E_{\mathrm{access}}(\pi, q, s)$, reduce exactly to the
formula defining~$D_{\mathrm{accept}}$.

Let $b = E_{\mathrm{access}}(\pi, q, s)$. By Definition~\ref{def:encoding}
(Table~\ref{tab:buffer-layout}), each buffer position maps to a policy
language variable: $b[0] = \alpha$, $b[1] = c$,
$\mathrm{u16le}(b,2) = \sigma$, and so on for the rule and request
fields. Substituting into $C_{\mathrm{access}}(b)$:

\textbf{Rate constraint:}
\begin{equation}
b[1] < N_{\max} \;\iff\; c' < N_{\max}
\end{equation}

\textbf{Access constraint} (for each disjunct $i = 0, \ldots, K-1$):
the four conjuncts in $\mathrm{Match}_i(b)$
(Equation~\ref{eq:match-i}) reduce as follows, with the rule-level
index shifted as $k = i + 1$ to match Definition~\ref{def:policy-decision}:
\begin{align*}
\mathrm{u32le}(b, 4\!+\!8i) = \mathrm{u32le}(b, 68)
  &\;\iff\; p_k = p_{\mathrm{req}} \\
b[8\!+\!8i] \mathbin{\&} b[72] = b[72]
  &\;\iff\; m_k \mathbin{\&} m_{\mathrm{req}} = m_{\mathrm{req}} \\
b[0] \geq b[9\!+\!8i]
  &\;\iff\; \alpha \geq \rho_k \\
\mathrm{u16le}(b, 2) \mathbin{\&} \mathrm{u16le}(b, 10\!+\!8i)
  &\;\iff\; \sigma \mathbin{\&} \sigma_k = \sigma_k
\end{align*}
Together these are exactly $f(q, s, r_k)$
(Definition~\ref{def:policy-decision}). Substituting and
collecting over all $K$ rule slots gives
\[
\textstyle\bigvee_{i=0}^{K-1} \mathrm{Match}_i(b) \;=\;
\bigvee_{k=1}^{K} f(q, s, r_k).
\]

\textbf{Full substitution into $C_{\mathrm{access}}$
(Equation~\ref{eq:c-access}):}
\begin{equation}
C_{\mathrm{access}}(b) \;=\;
  (c' < N_{\max}) \;\wedge\;
  \bigvee_{k=1}^{K} f(q, s, r_k)
\end{equation}

This is identical to $D_{\mathrm{accept}}(\pi, q, s)$
(Definition~\ref{def:policy-decision}). \hfill$\square$

\subsubsection{Verification Chain}

The complete argument chains two independent proofs to conclude that
the generated C code accepts exactly what the formal model accepts:
\begin{enumerate}
  \item \textbf{EverParse/Z3} (machine-checked): the generated C
        correctly implements the 3D constraints for all byte buffers.
        \[
          V_{\mathrm{access}}(b) = \TRUE \;\iff\;
          C_{\mathrm{access}}(b)
        \]
  \item \textbf{Paper proof} (\S\ref{sec:soundness}, above):
        substituting the encoding into the constraints recovers the
        formal decision function.
        \[
          C_{\mathrm{access}}(b) \;\iff\;
          D_{\mathrm{accept}}(\pi, q, s)
        \]
\end{enumerate}
Composing (1) and (2):
\begin{equation}
V_{\mathrm{access}}\big(E_{\mathrm{access}}(\pi, q, s)\big) = \TRUE
  \;\;\iff\;\; D_{\mathrm{accept}}(\pi, q, s)
\end{equation}
The hard part---C code correctness---is machine-checked by Z3.
The paper proof reduces to substituting
Definition~\ref{def:encoding} into $C_{\mathrm{access}}$ and
verifying it recovers Definition~\ref{def:policy-decision}.
The same result holds for PolicyBlob (policy updates,
\S\ref{sec:architecture}) with the same substitution structure---a
mechanical step over the fixed encoding, mechanisable in principle
though discharged on paper here (as is
Corollary~\ref{cor:structural-deny}).
This argument depends on the trust assumptions in
Table~\ref{tab:assumptions}.

\begin{table*}[t]
\centering\small
\caption{Trust assumptions underlying the soundness argument.
A1--A6 are platform-independent (required for
Theorem~\ref{thm:encoding} on any platform that runs the generated C
code); A7--A9 apply to the seL4 deployment
(\S\ref{sec:architecture}) and bound the end-to-end guarantee---that
requests cannot bypass the gate and unverified components cannot
corrupt the verified enforcement chain---but do not affect the core
soundness result.}
\label{tab:assumptions}
\resizebox{\textwidth}{!}{%
\begin{tabular}{@{}llp{5.8cm}p{8.2cm}@{}}
\toprule
ID & Scope & Assumption & Mitigation / Status \\
\midrule
A1 & Core & Identifier separation: the sentinel \texttt{0xDEADDEAD} is not a valid path identifier
   & Enforced by F* extractor postcondition. \\
A2 & Core & Fixed-offset byte-copy glue implementing $E_{\mathrm{access}}$ (Definition~\ref{def:encoding}). Input delivery via seL4 dataports and \texttt{memcpy} is trusted per A7--A9.
   & Auditable against Table~\ref{tab:buffer-layout}; confined by seL4 isolation. Formal verification is outside scope (\S\ref{sec:related-work}). \\
A3 & Core & EverParse TCB: EverParse compiler, F* type checker, and Z3 are trusted
   & Inherited from EverParse project~\cite{ramananandro_2019_everparseverifiedsecure}; Z3~\cite{demoura_2008_z3efficientsmt} and EverParse are mature, peer-reviewed, and production-deployed~\cite{swamy_2022_hardeningattacksurfaces}. \\
A4 & Core & C compiler correctness: generated C must be compiled correctly
   & Standard assumption shared by all verification approaches except CompCert~\cite{leroy_2006_formalcertificationcompiler} or binary-level verification~\cite{klein_2009_sel4formalverification}. \\
A5 & Core & Endianness: little-endian byte order assumed for uint32/uint16 fields
   & Matches x86 target and EverParse default. \\
A6 & Core & Scope bit independence: bitwise AND correctly models capability subset only when scope bits are semantically independent
   & Dependent capabilities require explicit implication constraints. \\
\midrule
A7 & Deployment & seL4 specification correctness: Isabelle/HOL proof shows C refines abstract spec; does not guarantee spec captures intended property
   & Inherent limitation of all specification-based verification. \\
A8 & Deployment & Hardware correctness: seL4 isolation assumes correct MMU, TLB, cache coherence, and interrupt controller
   & Hardware vulnerabilities (e.g., Spectre/Meltdown class) can violate properties the kernel proof assumes. \\
A9 & Deployment & seL4 verification boundary: deployed configuration includes unverified bootstrapping code, device drivers, and platform code
   & The version deployed here is not the fully verified seL4 configuration. \\
\bottomrule
\end{tabular}}
\end{table*}

\subsubsection{Consequences}
\label{sec:consequences}

\paragraph{Separation of Mechanism and Policy.}

\begin{corollary}[Verify once, enforce all]
\label{cor:verify-once}
The generated C code $V_{\mathrm{access}}$ is compiled from the 3D
specification once and verified once by Z3. At runtime, different policies
$\pi, \pi', \pi''$ produce different byte buffers via $E_{\mathrm{access}}$,
but all are correctly enforced by the same $V_{\mathrm{access}}$.
\end{corollary}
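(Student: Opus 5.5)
The corollary follows by instantiating Theorem~\ref{thm:encoding} at each policy in turn, with no per-policy reasoning. The key observation is that neither half of the verification chain mentions a particular policy. Theorem~\ref{thm:everparse} quantifies over \emph{all} byte buffers $b$. The constraint set $C_{\mathrm{access}}$ (Equation~\ref{eq:c-access}) depends only on fixed offsets and the compile-time constants $K$ and $N_{\max}$. So the single artefact $V_{\mathrm{access}}$ produced by one run of the toolchain (Table~\ref{tab:everparse-pipeline}) satisfies $V_{\mathrm{access}}(b)=\TRUE \iff C_{\mathrm{access}}(b)$ for every $b\in\mathbb{B}^{75}$. Here I take $\ell = 75$ and $|b| = 75$, which is guaranteed by totality of $E_{\mathrm{access}}$.

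The steps, in order, are these.
\begin{enumerate}
  \item Fix the compile-time parameters: $K$, $N_{\max}$, the endpoint set $\mathbf{P}$, and the buffer layout of Table~\ref{tab:buffer-layout}. Note that $V_{\mathrm{access}}$ is a function of these parameters alone, and that policy data enters only as bytes at offsets $4$--$67$.
  \item Take arbitrary policies $\pi,\pi',\pi''$, each a $K$-tuple of rules over the same domains, together with arbitrary $q$ and $s$. By totality of $E_{\mathrm{access}}$, each yields a buffer in $\mathbb{B}^{75}$.
  \item Apply Theorem~\ref{thm:encoding}, which is universally quantified over $\pi$, to each buffer. This gives $V_{\mathrm{access}}(E_{\mathrm{access}}(\pi,q,s))=\TRUE\iff D_{\mathrm{accept}}(\pi,q,s)$, and likewise for $\pi'$ and $\pi''$, all with the same function $V_{\mathrm{access}}$.
  \item By injectivity of $E_{\mathrm{access}}$, distinct policies do in general produce distinct buffers. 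This justifies the ``different byte buffers'' clause, although correctness does not depend on it.
\end{enumerate}

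The one step that needs care is justifying that changing the policy requires no recompilation. I would argue this by inspecting the order of quantifiers. Theorem~\ref{thm:everparse} has the form $\exists V_{\mathrm{access}}\,\forall b$, as opposed to $\forall \pi\,\exists V_\pi$, because the 3D specification contains no policy-specific constants. Rule paths, masks, roles and scopes are all read from the buffer rather than being baked into where-clauses.

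The main obstacle is confirming this last fact, namely that the specification is policy-independent. This concerns the artefact rather than the mathematics. I would check that every where-clause in $C_{\mathrm{access}}$ compares buffer fields with each other, or with the fixed constant $N_{\max}$, and never with a rule value.

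I would also make the scope of the claim precise. It holds for all $\pi$ over a fixed $K$ and a fixed domain configuration. Changing $K$ or adding endpoints changes the specification, and so falls outside the corollary, which is consistent with the three-tier update model.
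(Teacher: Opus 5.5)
Your proposal is correct and follows the paper's proof. That proof is the one-line observation that Theorem~\ref{thm:encoding} is universally quantified over $\pi$ for a single fixed $V_{\mathrm{access}}$, and your points on quantifier order and on the where-clauses containing no policy constants spell out what the paper says informally after Theorem~\ref{thm:everparse}.
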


\begin{proof}
Theorem~\ref{thm:encoding} is universally quantified over $\pi$.
\end{proof}

This is the core practical contribution, in three tiers. Traditional
verified access control re-verifies whenever the policy changes; we
verify the \emph{mechanism} once. Rule-content changes over a fixed
endpoint set are runtime data; adding endpoints or growing $K$ re-runs
the toolchain (\S\ref{sec:opnsense-coverage}) with no new hand proof;
extending the policy language requires new proofs.

\paragraph{Deny by Default as a Structural Property.}

\begin{corollary}[Structural deny by default]
\label{cor:structural-deny}
If all rules in $\pi$ are inactive ($p_i = \bot$ for all $i$),
then for all $q, s$:
\begin{equation}
\forall\, q, s:\;
  V_{\mathrm{access}}\big(E_{\mathrm{access}}(\pi, q, s)\big) = \FALSE
\end{equation}
\end{corollary}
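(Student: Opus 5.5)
The plan is to reduce the corollary to Theorem~\ref{thm:encoding} and then show that the decision function $D_{\mathrm{accept}}$ is false whenever every rule is inactive. Fix a policy $\pi = (r_1,\ldots,r_K)$ with $p_i = \bot$ for all $i$, and fix arbitrary $q = (p_{\mathrm{req}}, m_{\mathrm{req}})$ and $s = (\alpha, c, \sigma)$. By Theorem~\ref{thm:encoding}, $V_{\mathrm{access}}(E_{\mathrm{access}}(\pi,q,s)) = \TRUE$ holds if and only if $D_{\mathrm{accept}}(\pi,q,s)$ holds. It therefore suffices to prove $\neg D_{\mathrm{accept}}(\pi,q,s)$. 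Since the validator returns a boolean, $V_{\mathrm{access}}$ must then return $\FALSE$.

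By Equation~\eqref{eq:decision}, $D_{\mathrm{accept}}$ is the conjunction of the rate check with $\bigvee_{k=1}^{K} f(q,s,r_k)$. It is enough to falsify the disjunction, so the rate conjunct can be ignored entirely. This matches the DNF design: the conclusion does not depend on $c'$, $\alpha$, $\sigma$, or any non-path rule field. I then show that each disjunct is false. By Equation~\eqref{eq:match}, $f(q,s,r_k)$ contains the conjunct $p_{\mathrm{req}} = p_k$, and here $p_k = \bot$. By Definition~\ref{def:policy}, the extractor guarantees $p_{\mathrm{req}} \neq \bot$; equivalently, Assumption~A1 says $\bot = \texttt{0xDEADDEAD}$ is not a valid identifier, and $p_{\mathrm{req}} \in \mathbf{P}$ excludes it (Table~\ref{tab:domains}). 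So the path conjunct is false for every $k$, hence every $f(q,s,r_k)$ is false. A finite disjunction of $K$ false terms is false, so $D_{\mathrm{accept}}(\pi,q,s)$ is false.

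The main obstacle is not the logic but the precise scope of the quantifier ``for all $q$''. The result holds only for requests whose path identifier lies in $\mathbf{P}$, that is, excludes $\bot$. For an arbitrary 32-bit value $p_{\mathrm{req}} = \bot$ it would fail, because all eight slots would then match on path. I would therefore state explicitly that $q$ ranges over Definition~\ref{def:request}, so $p_{\mathrm{req}} \in \mathbf{P}$. This invariant is discharged by Assumption~A1 and the extractor's postcondition, not by the EverParse proof.

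As a cross-check independent of Theorem~\ref{thm:encoding}, I would also argue at the byte level using Theorem~\ref{thm:everparse}. In $b = E_{\mathrm{access}}(\pi,q,s)$, each slot gives $\mathrm{u32le}(b,4+8i) = \bot$, while $\mathrm{u32le}(b,68) = p_{\mathrm{req}} \neq \bot$. Hence $P_i(b)$ is false for every $i$, so every $\mathrm{Match}_i(b)$ is false and $C_{\mathrm{access}}(b)$ fails. This step relies on injectivity of the little-endian encoding under Assumption~A5.
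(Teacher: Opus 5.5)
Your proposal is correct and takes essentially the paper's route: every disjunct of $D_{\mathrm{accept}}$ fails on its path conjunct because $p_{\mathrm{req}} \neq \bot$ (Assumption~A1), so $D_{\mathrm{accept}}$ is false, and Theorem~\ref{thm:encoding} applied contrapositively gives $V_{\mathrm{access}} = \FALSE$. Your note that ``for all $q$'' must range over $\mathbf{P}$, which excludes $\bot$, makes explicit what the paper states only as ``any valid $p_{\mathrm{req}}$''; the byte-level cross-check via Theorem~\ref{thm:everparse} is a harmless extra.
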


\begin{proof}
By Definition~\ref{def:policy-decision},
\[
D_{\mathrm{accept}}(\pi, q, s) \;=\; (c' < N_{\max}) \,\wedge\,
\textstyle\bigvee_{i=1}^{K} f(q, s, r_i).
\]
When every rule is inactive, each $f(q, s, r_i)$ fails on the
path conjunct because $p_i = \bot$ cannot equal any valid
$p_{\mathrm{req}}$~(\S\ref{sec:domains}). The disjunction is
therefore $\FALSE$, so $\neg D_{\mathrm{accept}}(\pi, q, s)$ for
all $q, s$. By Theorem~\ref{thm:encoding} (contrapositive),
$V_{\mathrm{access}}$ returns $\FALSE$.
\end{proof}

This is not a runtime check that can be forgotten but a logical
consequence of the DNF structure and $\bot$ padding: no code path in
the generated validator \emph{can} grant access when all rules are
dead.

\paragraph{Isolation-Protected Encoding (seL4 Deployment).}

Theorem~\ref{thm:encoding} holds on any platform. On seL4, isolation
adds a further guarantee:

\begin{corollary}[Isolation-protected encoding]
\label{cor:isolation}
Under seL4 isolation, the only ways to influence
$b = E_{\mathrm{access}}(\pi, q, s)$ are via one of the following
paths:
\begin{enumerate}
  \item \textbf{Authenticate} and change $(\alpha, \sigma)$:
        the Authenticator (unverified, isolated) produces a
        session state $s$
        (Definition~\ref{def:session-state}) from the supplied
        credentials
  \item \textbf{Upload a new policy} and change $\pi$: via the
        policy-update path, which requires $\alpha \geq \texttt{ADMIN}$
        and is enforced by the PolicyBlob validator
        (\S\ref{sec:encoding})
  \item \textbf{Send a request} and change $(p_{\mathrm{req}}, m_{\mathrm{req}})$:
        via the F*-verified extractor (the request being
        evaluated, not trusted data)
  \item \textbf{Send requests} and increment $c$: rate counter, bounded by
        $N_{\max}$
\end{enumerate}
\end{corollary}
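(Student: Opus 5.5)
The plan is to prove Corollary~\ref{cor:isolation} by a dataflow argument. First I partition the 75-byte buffer $b$ by the regions of Table~\ref{tab:buffer-layout}. Then I use seL4 isolation (A7--A9) to show that each region can be written only through the component channel assigned to it.

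The argument rests on two facts.
\begin{itemize}
  \item By Algorithm~\ref{obs:pure-copy} and A2, Stage~3 is a straight-line sequence of unconditional byte copies. Every byte of $b$ is therefore a fixed function of exactly one SecurityParamsWire field or one rule-slot field. There are no conditionals or computed offsets, so no byte can depend on anything else.
  \item $b$ lives entirely inside PolicyGate's address space and is never transmitted. Under seL4 capability isolation, the only inputs to PolicyGate are the dataports and IPC endpoints granted to it in the static component configuration.
\end{itemize}
Combining the two, $b$ is determined by the contents of those channels together with PolicyGate's own in-memory policy state.

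Next I enumerate the regions and trace each to its source channel.
\begin{enumerate}
  \item Offsets 0 and 2--3 hold $(\alpha,\sigma)$. They are filled from the Authenticator response via $E_{\mathrm{IPC}}$, which gives path~(1).
  \item Offsets 4--67 hold $\pi$. They are read from PolicyGate's rule table. I would argue that the only code writing this table is the update handler. That handler runs only after the PolicyBlob validator accepts; by the PolicyBlob analogue of Theorem~\ref{thm:encoding}, acceptance requires $\alpha \geq \texttt{ADMIN}$. This gives path~(2).
  \item Offsets 68--72 hold $(p_{\mathrm{req}}, m_{\mathrm{req}})$. They come from the raw request through the F*-verified $E_{\mathrm{HTTP}}$, which gives path~(3).
  \item Offset 1 holds $c$. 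It is supplied by the RateLimiter and advances only per Definition~\ref{def:state-transition}, which gives path~(4). The bound by $N_{\max}$ is enforced at the gate by the rate conjunct of Theorem~\ref{thm:encoding}.
  \item Offsets 73--74 are phantom bytes set to a constant. Their values do not affect acceptance, so they add no further influence path.
\end{enumerate}

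The main obstacle is the policy region. Regions 1, 3 and 4 are fed by dedicated channels, but the rule table is persistent mutable state. I must exclude writes to it from any source other than the validated update path. Two candidates need ruling out: stray writes from unverified components sharing memory, and PolicyGate's own request-handling code.

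My first approach is to cite the CAmkES/seL4 configuration. If no other component holds a write capability to PolicyGate's memory, foreign writes are impossible by A7--A9. For writes from within PolicyGate, I would use the single-threaded structure of PolicyGate and the audited Stage~3 code, under A2: neither the update handler nor Stage~3 writes the table except in the post-validation branch.

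If that internal argument proves too informal, I would weaken the corollary to an explicit assumption that PolicyGate's update handler is the sole writer of the rule table. This would be stated as a further trust assumption alongside A2.
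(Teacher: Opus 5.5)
Your proposal is correct and is essentially the paper's own justification (seL4 isolation under A7--A9 protects PolicyGate's state, paths 2--4 are mediated by verified validators or the bounded counter, and path 1 rests on the isolated Authenticator), carried out region by region over Table~\ref{tab:buffer-layout} with A2 made explicit. One correction: $(\alpha,\sigma)$, $c$ and $(p_{\mathrm{req}},m_{\mathrm{req}})$ all reach PolicyGate over the single extractor channel, so their separation rests on the extractor's verified field provenance (Table~\ref{tab:field-provenance}) rather than on dedicated seL4 channels.

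The obstacle you isolate is real, and you handle it more carefully than the paper. The paper relies on isolation alone and never argues that the validated update path is the only code inside PolicyGate that writes the rule table. Since the rule-slot deserialisation and commit code lies outside A2 as stated, recording that sole-writer property as an explicit assumption is the right way to close the step.
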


Paths~2--4 are mediated by verified validators or bounded counters;
Path~1 depends on the unverified Authenticator, but seL4 isolation
prevents corruption of PolicyGate's state. The encoding and proof are
platform-independent; end-to-end buffer integrity relies on the host's
isolation, which seL4 provides under Assumptions~A7--A9
(Table~\ref{tab:assumptions}).

\subsection{Platform Independence and Reference Deployment}
\label{sec:reference-deployment}
\label{sec:architecture}

Theorem~\ref{thm:encoding} holds on any platform that runs the
generated C code. Deploying the validator in production requires
two architectural guarantees outside the scope of the theorem:
every request must reach the gate, and no unverified component may
corrupt the buffer the gate reads. We demonstrate both using an
eight-component CAmkES (Component Architecture for microkernel-based
Embedded Systems) architecture on
seL4~\cite{klein_2009_sel4formalverification,
t.murray_2013_sel4generalpurpose}; a non-seL4 host (e.g., Linux
with a reverse-proxy frontend and seccomp/Landlock sandboxing)
replaces A7--A9 with its own trust base while A1--A6 are unchanged.
We have not measured a non-seL4 deployment.

\begin{figure}[t]
\centering
\scalebox{1.05}{%
\begin{tikzpicture}[
  comp/.style={
    rectangle, draw, fill=white,
    minimum height=0.7cm,
    font=\footnotesize, align=center, inner sep=3pt
  },
  netcomp/.style={comp, line width=0.5pt, minimum width=2.6cm},
  verified/.style={comp, line width=1.8pt, minimum width=3.2cm},
  unvcomp/.style={comp, line width=0.5pt, minimum width=3.2cm},
  passive/.style={comp, line width=0.5pt, dashed, minimum width=2.0cm,
                  minimum height=0.65cm},
  pipelinearrow/.style={<->, >=stealth, semithick},
  rpcarrow/.style={<->, >=stealth, semithick, dashed}
]

\fill[gray!12, rounded corners=4pt]
  (-1.45, 0.5) rectangle (1.45, -2.95);
\fill[blue!6, rounded corners=4pt]
  (-1.85, -3.25) rectangle (4.1, -7.55);

\node[font=\scriptsize\itshape, rotate=90, anchor=center]
  at (-2.1, -1.2) {Domain 0};
\node[font=\scriptsize\itshape, rotate=90, anchor=center]
  at (-2.1, -5.4) {Domain 1};

\node[netcomp] (e1000)    at (0,  0.0)  {E1000 Driver};
\node[netcomp] (tlsproxy) at (0, -1.2)  {TLS Proxy};
\node[netcomp] (lwip)     at (0, -2.4)  {lwIP Proxy};

\node[verified, minimum height=2.0cm] (fstar) at (0, -4.1)
  {F* Extractor\\{\scriptsize(PEP + router)}};

\node[passive] (auth) at (3.1, -3.5)
  {Authenticator\\[-1pt]{\scriptsize(IdP)}};
\node[passive] (rl)   at (3.1, -4.7)
  {RateLimiter\\[-1pt]{\scriptsize(counter)}};

\node[verified] (gate) at (0, -5.8)
  {PolicyGate\\{\scriptsize(PDP + policy store)}};
\node[unvcomp]  (app)  at (0, -7.0)
  {Protected App};

\draw[pipelinearrow] (e1000.south)    -- node[right, font=\scriptsize]
  {raw frames} (tlsproxy.north);
\draw[pipelinearrow] (tlsproxy.south) -- node[right, font=\scriptsize]
  {raw frames} (lwip.north);
\draw[pipelinearrow] (lwip.south)     -- node[right, font=\scriptsize]
  {raw HTTP} (fstar.north);
\draw[pipelinearrow] (fstar.south)    -- node[right, font=\scriptsize]
  {metadata + body} (gate.north);
\draw[pipelinearrow] (gate.south)     -- node[right, font=\scriptsize]
  {AppReq / AppResp} (app.north);

\draw[rpcarrow] ([yshift=5pt]fstar.east) --
  node[above, font=\scriptsize] {RPC} (auth.west);
\draw[rpcarrow] ([yshift=-10pt]fstar.east) --
  node[above, font=\scriptsize] {RPC} (rl.west);

\node[verified, minimum width=0.9cm, minimum height=0.35cm,
      font=\scriptsize] (lv) at (-0.7, -7.8) {Verified};
\node[netcomp, minimum width=0.9cm, minimum height=0.35cm,
      font=\scriptsize] (lu) at (0.9, -7.8) {Unverified};
\node[passive, minimum width=1.1cm, minimum height=0.35cm,
      font=\scriptsize] (lp) at (2.7, -7.8) {Passive};

\end{tikzpicture}%
}
\caption{Eight-component CAmkES deployment on seL4.
Domains 0 and 1 are seL4 scheduling domains; the ingress chain
(E1000, TLS Proxy, lwIP---lightweight IP) is unverified, confined by
seL4, and terminates TLS before the F*-verified Extractor.
PEP/PDP/IdP: policy enforcement/decision point, identity provider.}
\label{fig:architecture}
\end{figure}

Figure~\ref{fig:architecture} shows the reference deployment: thick
borders mark verified components (isolated by seL4), dashed borders
mark passive unverified RPC components, and all connections are
bidirectional ring buffers. The F* Extractor parses raw HTTP,
queries the Authenticator and RateLimiter via RPC, and populates the
11 SecurityParamsWire fields. PolicyGate hosts the EverParse
validators and an in-memory policy store: it makes no RPC calls,
constructs the AccessRequest buffer from the verified metadata, and
handles policy updates via a second validator (PolicyBlob).

The F*~Extractor (8 modules, 1{,}264 lines, zero
\texttt{admit}/\texttt{assume}) is verified by
F*~\cite{swamy_2016_dependenttypesmultimonadic}/Z3 and compiled to
C by KreMLin~\cite{protzenko_2017_verifiedlowlevelprogramming} for memory safety, functional correctness, and
termination (throughout, \emph{extraction} denotes this F*-level
HTTP-to-buffer transform, distinct from the KreMLin Low*-to-C step). HTTP extraction (7~modules) produces 6~fields---path
identifier, method, bounded token, and bounded body---with proofs
against the HTTP grammar fragment; IPC extraction (1~module)
produces the remaining 5 from Authenticator and RateLimiter
responses, with role clamped ${\leq}\,2$ and subject identity
length clamped ${\leq}\,32$. All 11 SecurityParamsWire fields
(Table~\ref{tab:field-provenance}) have verified provenance. The EverParse
3D specification (\S\ref{sec:encoding}) generates two validators
from the same file: AccessRequest proves the policy decision
(Theorem~\ref{thm:encoding}), and PolicyBlob validates runtime
policy updates (admin role, rate limit, rule count). Raw HTTP bytes
are confined to the extractor's address space and destroyed after
extraction.

The Authenticator is unverified C calling
HACL*-verified~\cite{zinzindohoue_2017_haclverifiedmodern}
primitives; a bug could emit wrong $(\alpha, \sigma)$---an
authentication bug separate from enforcement correctness. seL4
isolation prevents corruption of PolicyGate's state
(Corollary~\ref{cor:isolation}); F*-verifying the Authenticator
is future work. The RateLimiter is a passive counter enforced as
a parse constraint ($c' < N_{\max}$); with PolicyGate these
components form a \emph{verified enforcement chain} between the
untrusted network and the application. Runtime policy updates are
atomic: PolicyGate is single-threaded, so the PolicyBlob
validator, rule-slot deserialisation, and struct-swap commit
execute in one critical section while incoming requests wait on
the gate's mailbox.

\subsubsection{Verified Field Provenance}

Table~\ref{tab:field-provenance} shows the 11 byte-range entries
the extractor produces per HTTP request. No unverified code writes
to this structure: 6 entries come from the F*-verified HTTP
extractor, 5 from the F*-verified IPC extractor. $\alpha$ and
$\sigma$ arrive only from IPC.Extract, structurally excluding bugs
that trust HTTP headers as authentication state
(e.g.~CVE-2024-0012~\cite{nistnationalvulnerabilitydatabase_2024_cve20240012panosauthentication});
\texttt{path\_id} is derived by byte-for-byte equality of the raw
request-target against a fixed table in the 3D spec, with no
canonicalisation in the trusted chain, so non-matching requests
deny (Corollary~\ref{cor:structural-deny}) and canonicaliser bugs
(CVE-2025-0108, CVE-2025-20362) are structurally excluded by having no
canonicaliser to confuse: distinct byte strings are distinct
identifiers, the agreement those CVEs lacked. The cost is that a
deployment needing both spellings of a resource must enumerate both,
as firmware updates. The full 3D specification
(\texttt{RbacPolicy.3d}, 245~lines) is included in the artifact.

\begin{table}[t]
\centering\small
\caption{Security parameters extracted per request (11~entries;
length and payload counted separately).
Of these, 5 (marked $\star$) are remapped into the AccessRequest
buffer (Table~\ref{tab:buffer-layout}); the remaining 6 are
forwarded to the application.}
\label{tab:field-provenance}
\begin{tabular}{@{}rrlll@{}}
\toprule
Off & Size & Field & Producer & Verified bound \\
\midrule
0     & 1  & rate\_count$\star$  & IPC.Extract   & byte copy \\
1--4  & 4  & path\_id$\star$     & HTTP.Extract  & sentinel match \\
5     & 1  & method$\star$       & HTTP.Extract  & ${\in}\{\texttt{0x00},\ldots,\texttt{0x20}\}$ \\
6     & 1  & role$\star$         & IPC.Extract   & ${\leq}\,2$ \\
7--8  & 2  & scope$\star$        & IPC.Extract   & byte copy \\
9     & 1  & sub\_id\_len & IPC.Extract   & ${\leq}\,32$ \\
10--41& 32 & subject\_id  & IPC.Extract   & bounded copy \\
42    & 1  & token\_len   & HTTP.Extract  & ${\leq}\,128$ \\
43--170& 128& token       & HTTP.Extract  & bounded copy \\
171--174& 4 & body\_len    & HTTP.Extract  & ${\leq}\,1361$ (MTU) \\
175+  & var& body         & HTTP.Extract  & bounded copy \\
\bottomrule
\end{tabular}
\end{table}

The 5 starred fields are remapped into the AccessRequest buffer
(Table~\ref{tab:buffer-layout}) by the encoding glue code
(\S\ref{sec:encoding}). The remaining 6 fields (subject identity,
token, and body) are forwarded to the application component and are
not part of the policy decision.

\section{Evaluation}
\label{sec:evaluation}

\subsection{Expressiveness and Verifiability Limits}

The policy language (Definition~\ref{def:policy}) expresses a bounded fragment
of role-based access control with capability-based scope:

\begin{itemize}
  \item Fixed $K$ rule slots ($K = 8$ in implementation)
  \item Three role levels with total order
  \item Single role per session (no multi-role activation)
  \item 16-bit scope bitfield ($2^{16}$ possible capability sets)
  \item Exact-match on path; bitmask subset-match on method (no wildcards)
  \item Global rate limit (single counter, single threshold)
\end{itemize}

Path matching uses exact 32-bit identifiers; wildcard patterns
(e.g.\ \texttt{api/module/*}) are outside the current language.
Exact-match identifiers structurally exclude the
representation-confusion class of vulnerability that motivated this
work (\S\ref{sec:introduction}). Wildcards fall outside the
recognizer's class: matching them needs variable-length, backtracking
recognition that EverParse's forward-only model cannot express.
Termination is not the obstacle---F* supports well-founded recursion,
which the extractor uses. \S\ref{sec:opnsense-coverage} validates the
boundary against the OPNsense~19.1.7 access-control surface.

The contribution is the verification methodology, not expressiveness;
the encoding generalises to more rules, more role levels, and further
universal constraints (e.g.\ body-size and content-type checks).

\textbf{Verifiability limit.} The practical limit on $K$ for monolithic
verification is not runtime performance---the validator executes in $O(K)$
time (Table~\ref{tab:wcet})---but \emph{compile-time memory consumption}
during Z3 proof discharge.
Table~\ref{tab:monolithic-scaling} shows the empirical scaling. The
scaling measurements use stand-alone AccessRequest-only 3D specs
generated at each $K$ value by the
\texttt{scaling\_experiment/} harness in the artefact; they are
distinct from the production 3D spec of the gate described in
\S\ref{sec:encoding}, which bundles the AccessRequest validator
with the policy-update validator and consequently generates more
C code at the same $K$.

\begin{table}[t]
\centering\small
\caption{Monolithic verification scaling (Z3~v4.13.3, Intel Core
Ultra~7 155H, 24\,GiB container).}
\label{tab:monolithic-scaling}
\begin{tabular}{@{}rrrrp{1.6cm}@{}}
\toprule
$K$ & Z3 Time (s) & Gen.\ C (lines) & Peak RSS (MiB) & Status \\
\midrule
4   & 10    & 850     & 1{,}010  & Verified \\
8   & 19    & 1{,}414 & 1{,}341  & Verified \\
16  & 76    & 2{,}542 & 7{,}162  & Verified \\
24  & 252   & 3{,}670 & 21{,}705 & Verified \\
32  & ${\approx}525$ & --- & 24{,}532 & OOM (SIGKILL) \\
\bottomrule
\end{tabular}
\end{table}

The $K\!=\!8 \to K\!=\!16$ transition exhibits a $5.3{\times}$ memory
increase for $2{\times}$ the rule count
(Table~\ref{tab:monolithic-scaling}), indicating a phase transition in
Z3's DPLL(T) proof search. This is a structural property of the
monolithic formula that manifests as memory exhaustion---not a time
limit. Partitioned verification avoids this entirely: each partition is
an independent 8-rule formula verified in ${\approx}16.5$\,s using
${\approx}1.2$\,GB (Table~\ref{tab:partitioned-scaling}).

\subsection{Compositional Verification via Partitioning}
\label{sec:partitioning}

The monolithic verifiability limit motivates a compositional approach.
Since the access formula is a disjunction, it can be partitioned into
independent sub-disjunctions, each verified separately.
The deployed gate is monolithic---all eight rules in one verified
check, no combiner---so this partitioned scheme is a scaling
experiment, not the deployed configuration; its OR-combiner is
\emph{unverified} glue whose only failure mode is false denial, never
false permit (\S\ref{sec:empirical-validation}).

The scheme generates $n$ EverParse 3D specifications of $K$ rules
each (DNF), plus one CNF specification for the rate limit. Each
sub-specification is verified once as a standalone artefact,
composed by glue logic.

\begin{definition}[Partitioned validators]
\label{def:partition}
Let $\pi = (r_1, \ldots, r_{nK})$ be a policy of $nK$ rules,
split into $n$ partitions of $K$ rules each: partition $j$
consists of rules $(j-1)K+1$ through $jK$, for $j = 1, \ldots, n$.
Writing $\pi_j$ for the $j$-th partition, the \emph{per-partition
match function} is
\begin{equation}
\phi_j(\pi_j, q, s) \;\triangleq\;
  \bigvee_{i=1}^{K} f(q, s, r_{(j-1)K+i}),
\end{equation}
where $f$ is the per-rule match function of
Definition~\ref{def:policy-decision}. The rate limit is enforced
separately by a 2-byte CNF specification whose where-clause is
$c' < N_{\max}$. The partitioned decision function combines the
two:
\begin{equation}
D_{\mathrm{part}}(\pi, q, s) \;\triangleq\;
  (c' < N_{\max}) \;\wedge\;
  \bigvee_{j=1}^{n} \phi_j(\pi_j, q, s).
\end{equation}
At runtime, each $\phi_j$ is implemented by an EverParse-generated
validator $V_j$ over a 73-byte AccessRequest-only DNF spec
($\alpha$, $\sigma$, the rules of $\pi_j$, $p_{\mathrm{req}}$,
$m_{\mathrm{req}}$, phantom), distinct from $V_{\mathrm{access}}$
of Theorem~\ref{thm:encoding}.
\end{definition}

\begin{lemma}[Partition equivalence]
\label{lem:partition}
For every policy $\pi$ of $nK$ rules, every request $q$, and
every session state $s$,
\[
D_{\mathrm{part}}(\pi, q, s) \;=\; D_{\mathrm{accept}}(\pi, q, s).
\]
\end{lemma}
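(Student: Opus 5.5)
The plan is to unfold both sides into plain propositional formulas over the same atoms and observe that they differ only by the grouping of a finite disjunction. First, I will read $D_{\mathrm{accept}}$ with $K$ replaced by the full rule count $nK$: that is, Definition~\ref{def:policy-decision} applied to $\pi = (r_1,\ldots,r_{nK})$ gives $(c' < N_{\max}) \wedge \bigvee_{k=1}^{nK} f(q,s,r_k)$. This step needs a remark, because Definition~\ref{def:policy} fixes $K$ at compile time, while here the ``policy'' has $nK$ slots. I will state explicitly that the lemma instantiates the decision function at length $nK$, since nothing in Equation~\ref{eq:decision} depends on $K$ beyond the range of the disjunction.

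Second, I will expand $D_{\mathrm{part}}$ by substituting the definition of $\phi_j$ from Definition~\ref{def:partition}. This gives $(c' < N_{\max}) \wedge \bigvee_{j=1}^{n}\bigvee_{i=1}^{K} f(q,s,r_{(j-1)K+i})$. The rate conjunct is literally the same formula on both sides, evaluated on the same updated counter $c'$ from Definition~\ref{def:state-transition}. It therefore suffices to show the two disjunctions agree.

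Third, the core step is reindexing. The map $(j,i)\mapsto k=(j-1)K+i$ is a bijection from $\{1,\ldots,n\}\times\{1,\ldots,K\}$ onto $\{1,\ldots,nK\}$. Its inverse is $j=\lceil k/K\rceil$ and $i = k-(j-1)K$. Since the partitions are consecutive blocks covering every index exactly once, the nested disjunction is a disjunction over the same finite multiset of terms $f(q,s,r_k)$. By associativity and commutativity of $\vee$ it equals $\bigvee_{k=1}^{nK} f(q,s,r_k)$. Equality of the two conjunctions then follows by congruence of $\wedge$.

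I expect no genuine mathematical obstacle. The only delicate point is the bookkeeping noted above: making sure $D_{\mathrm{accept}}$ is read at $nK$ rules, and that each $\phi_j$ sees the same $q$ and $s$ as the monolithic function. In particular, all partitions must share the single session state, including $c'$, and the single request. I will also note that the lemma concerns only the abstract functions. Connecting each $V_j$ to $\phi_j$ would reuse the substitution argument of Theorem~\ref{thm:encoding} on the 73-byte layout. That connection, and the unverified OR-combiner, lie outside this lemma.
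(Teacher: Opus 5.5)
Your proposal is correct and matches the paper's proof: unfold $\phi_j$, reindex via $k=(j-1)K+i$, use associativity and commutativity of $\vee$, and read $D_{\mathrm{accept}}$ as Definition~\ref{def:policy-decision} instantiated on the full $nK$-rule policy. Your explicit bijection and your remark about instantiating the decision function at length $nK$ only spell out bookkeeping the paper leaves implicit.
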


\begin{proof}
Unfolding $\phi_j$ in $D_{\mathrm{part}}$ and reindexing
($k = (j-1)K + i$):
\begin{align*}
D_{\mathrm{part}}(\pi, q, s)
  &= (c' < N_{\max}) \wedge
     \bigvee_{j=1}^{n}\bigvee_{i=1}^{K} f(q, s, r_{(j-1)K+i}) \\
  &= (c' < N_{\max}) \wedge
     \bigvee_{k=1}^{nK} f(q, s, r_k)
  \;=\; D_{\mathrm{accept}}(\pi, q, s).
\end{align*}
Disjunction is associative and commutative, and the last step is
Definition~\ref{def:policy-decision} on the full $nK$-rule policy.
\end{proof}

\begin{table}[t]
\centering\small
\caption{Partitioned vs.\ monolithic verification scaling.}
\label{tab:partitioned-scaling}
\begin{tabular}{@{}lrrrrr@{}}
\toprule
Config & $n$ & $K$ & Total rules & Per-part.\ (s) & Total Z3 (s) \\
\midrule
Mono-8   & 1   & 8  & 8     & 19    & 19 \\
Mono-16  & 1   & 16 & 16    & 76    & 76 \\
Mono-24  & 1   & 24 & 24    & 252   & 252 \\
Mono-32  & 1   & 32 & 32    & \multicolumn{2}{c}{OOM (${>}24$\,GiB)} \\
\midrule
Part-4   & 4   & 8  & 32    & 16.4  & 66 \\
Part-16  & 16  & 8  & 128   & 16.8  & 269 \\
Part-64$^\dagger$
         & 64  & 8  & 512   & 15.6  & 999 \\
Part-256 & 256 & 8  & 2{,}048 & 15.3 & 3{,}920 \\
\bottomrule
\multicolumn{6}{@{}p{0.95\linewidth}@{}}{\footnotesize
  $^\dagger$Part-64 covers the full OPNsense~19.1.7 ACL:
  420 concrete paths fit in 512 rule slots with 92 unused slots
  filled by the inactive sentinel.} \\
\end{tabular}
\end{table}

\begin{table}[t]
\centering\small
\caption{Validator WCET (1\,M iterations, warm cache, \texttt{-O2},
Intel Core Ultra~7 155H).}
\label{tab:wcet}
\begin{tabular}{@{}lrrrrrr@{}}
\toprule
         &     &     & Total  & Buf      & Worst     & Best \\
Config   & $n$ & $K$ & rules  & (bytes)  & (ns/call) & (ns/call) \\
\midrule
Mono-4   & 1   & 4   & 4       & 43        & 12       & 8 \\
Mono-8   & 1   & 8   & 8       & 75        & 17       & 14 \\
Mono-16  & 1   & 16  & 16      & 139       & 26       & 23 \\
Mono-24  & 1   & 24  & 24      & 203       & 37       & 33 \\
\midrule
Part-4   & 4   & 8   & 32      & 294       & 61       & 15 \\
Part-16  & 16  & 8   & 128     & 1{,}170   & 261      & 16 \\
Part-64  & 64  & 8   & 512     & 4{,}674   & 992      & 15 \\
Part-256 & 256 & 8   & 2{,}048 & 18{,}690  & 3{,}944  & 15 \\
\bottomrule
\end{tabular}
\begin{flushleft}
\footnotesize Part-$n$ rows measure $n$ sequential Partition
$K\!=\!8$ calls (73 bytes each) plus the 2-byte rate constraint
$(c' < N_{\max})$; Buf column reports total bytes scanned per
worst-case decision ($73n + 2$). Dry-run measurements on host;
canonical container measurements forthcoming.
\end{flushleft}
\end{table}

\textbf{Runtime overhead.}
Buffer construction (pure \texttt{memcpy}) dominates the per-call
cost. Measured worst-case scales linearly in $n$
(Table~\ref{tab:wcet}, Part-$n$ rows): Part-64 (the
OPNsense-covering configuration) decides in 992\,ns, within 11\%
of the linear prediction. Short-circuit evaluation keeps best-case
constant at ${\sim}15$\,ns when the matching rule is in the first
partition.

\textbf{Trust analysis.}
The partitioned glue introduces no trust beyond~A2: per-partition
buffer construction uses the same fixed-offset byte copies as the
monolithic encoding (Observation~\ref{obs:pure-copy}). The verified
code scales linearly (1{,}332~lines per partition) while the unverified
glue stays constant-size, and equivalence was confirmed
experimentally: Part-2 and Mono-16 give identical decisions across all
14~test inputs (Batch~D, Group~A).

\subsection{Expressiveness: OPNsense Access Control Coverage}
\label{sec:opnsense-coverage}

We validate the expressiveness boundary against OPNsense
19.1.7~\cite{nistnationalvulnerabilitydatabase_2019_cve201911816incorrectaccess},
the real-world firewall affected by CVE-2019-11816.

\subsubsection{Coverage}

OPNsense~19.1.7 defines 201 URL patterns across 11 \texttt{ACL.xml}
files, including wildcards (e.g.\ \texttt{api/firewall/*}).
Expanding wildcards against the controller tree yields 420 unique
concrete paths. Our language covers all 420 as exact-match path
identifiers. Part-64 (512 rule slots) \emph{Z3-verifies} this
configuration in ${\sim}16.6$\,min
(Table~\ref{tab:partitioned-scaling})---a compile-time verifiability
result; end-to-end deployment uses an 8-rule subset. The CVE-2019-11816 attack
surface---login, user management, group management, firmware
update, system reboot, and status endpoints---is a subset of 7
paths at $K\!=\!7$, requiring two role levels.

\subsubsection{Wildcard Exclusion}

Our language does not support wildcard patterns. EverParse requires
fixed-size, bounded struct entries; while wildcards could in
principle be mapped to a single path identifier at the F* extractor
level, verifying that mapping is impractical. Adding new endpoints
therefore requires re-running EverParse, aligned with commercial
firmware-update practice.

\subsubsection{Relevance to CVE-2019-11816}

CVE-2019-11816 involved two enforcement bugs while the XML ACL policy
was correct. The first, in \texttt{isPageAccessible}, used
\texttt{strpos} to test whether a privilege string appeared
\emph{anywhere} in the URL, so appending \texttt{?api/core/menu}
bypassed the ACL and reached user management---enforcement logic
independent of wildcards, and exactly the class our gate eliminates
(Theorem~\ref{thm:encoding}). The second, in \texttt{urlMatch}, turned
wildcards into regexes via \texttt{str\_replace('*','.*')}, enabling
traversal through the \texttt{.*}: specific to wildcard expansion,
outside our language.

\subsection{Empirical Validation}
\label{sec:empirical-validation}

\subsubsection{Test Summary}

The formal argument is supported by 391 passing tests across five
categories (Table~\ref{tab:test-inventory}). The tests confirm that
trust assumptions not covered by the proof hold in practice, and
exercise the full pipeline including unverified components.

\begin{table}[t]
\centering
\caption{Test inventory. All tests and harnesses are included in the
artifact.}
\label{tab:test-inventory}
\small
\begin{tabular}{@{}llr@{}}
\toprule
Category & Description & Tests \\
\midrule
Methodology & \texttt{everparse\_dnf\_poc/} batches & 211 \\
Production & \texttt{pipeline\_test.c} (C1--C8) & 46 \\
F* Extractor & \texttt{verified/} (6 harnesses) & 119 \\
RateLimiter & \texttt{test\_rate\_limiter.c} & 8 \\
Integration & QEMU end-to-end (Docker) & 7 \\
\midrule
\textbf{Total} & & \textbf{391} \\
\bottomrule
\end{tabular}
\end{table}

\subsubsection{Experimental Confirmation of Formal Claims}

Table~\ref{tab:experiments} maps each formal claim to a concrete
experiment and its result. The headline confirmations are
Theorem~\ref{thm:encoding} (7/7 QEMU end-to-end) and
Lemma~\ref{lem:partition} (Part-2 vs Mono-16, 14/14 match); the
remaining rows exercise individual primitives of the encoding.

\begin{table}[t]
\centering
\caption{Experimental confirmation of formal claims.}
\label{tab:experiments}
\small
\begin{tabular}{@{}p{2.2cm}p{2.6cm}p{2.6cm}@{}}
\toprule
Claim & Experiment & Result \\
\midrule
DNF deny-default (Cor.~\ref{cor:structural-deny}) &
  C4: all-$\bot$ policy &
  ADMIN denied \\
Verify once, enforce all (Cor.~\ref{cor:verify-once}) &
  Expt.~3: same binary, different data &
  Different decisions \\
CNF overrides DNF &
  C8: rule match but rate exceeded &
  Denied (HTTP 429) \\
4-way conjunction &
  C7: all 4 conditions tested &
  8/8 correct \\
Scope subset &
  C7: operator has 0x03, rule needs 0x04 &
  Denied (correct) \\
Role hierarchy ($\alpha \!\geq\! \rho$) &
  C1: OPER $<$ ADMIN &
  Correct \\
Field provenance (11 fields) &
  IPC.Extract: 13 tests covering role, scope, subject, rate &
  13/13 pass \\
Full pipeline (Thm~\ref{thm:encoding}) &
  QEMU end-to-end &
  7/7 pass \\
\midrule
Linear compile scaling &
  Batch~D: $n\!=\!1{,}2{,}4{,}8{,}16$ at $K\!=\!8$ &
  ${\sim}16.5$\,s/part. \\
Partition equiv.\ (Lem.~\ref{lem:partition}) &
  Part-2 vs Mono-16 &
  14/14 match \\
Deny-default (partitioned) &
  All-$\bot$ across $n$ partitions &
  Deny confirmed \\
\bottomrule
\end{tabular}
\end{table}

\subsubsection{Performance}

\textbf{Runtime.} The generated validator executes in
${\sim}150$\,ns/call at \texttt{-O0}; at \texttt{-O2}, worst-case
latency is 17\,ns for $K\!=\!8$ (Table~\ref{tab:wcet}). The
validator's control flow is structurally input-independent: the
worst-case path (all $K$ disjuncts evaluated) is statically
determined by $K$ alone.

\textbf{Comparison with general-purpose engines.}
The execution times of our scheme are broadly in line with those
of other openly available access control systems, noting that the
access control models are different. The Cedar
benchmark~\cite[\S5.2]{cutler_2024_cedarnewlanguage} (EC2
m5.4xlarge) reports median latencies of ${\sim}4$--$11\,\mu$s for
Cedar, 76--$676\,\mu$s for Rego (the Open Policy Agent, OPA), and
89--$746\,\mu$s for OpenFGA (Open Fine-Grained Authorization) on
entity-graph authorization (5--50 entities with
transitive relationships); our validator measures 17\,ns at
$K\!=\!8$ and 992\,ns at Part-64 (512 rules) on flat rule
decisions. Both numbers place decision latency well below typical
network I/O.

\textbf{Compile-time verification.}
EverParse 3D-to-C compilation takes ${\sim}19$\,s at $K\!=\!8$.
The monolithic formula scales super-linearly in $K$
(Table~\ref{tab:monolithic-scaling}: OOM at $K\!=\!32$);
partitioning recovers linearity---each $K\!=\!8$ partition
verifies in ${\sim}16.5$\,s independent of $n$, so Part-256
finishes in ${\sim}65$\,min
(Table~\ref{tab:partitioned-scaling}). F* extraction verification
adds ${\sim}10.5$\,s. All are one-time costs.

\section{Discussion and Related Work}
\label{sec:related-work}

Theorem~\ref{thm:encoding} proves that the deployed C gate agrees
with the mathematical decision function for every policy, request,
and session state; it does not verify that the policy is sensible,
bound integrity past the gate, or defend against side channels.
The restricted language is deliberate: richer features such as
wildcard paths or quantified constraints would either break
EverParse's forward-only parser model or make
Z3~\cite{demoura_2008_z3efficientsmt} discharge intractable.
Assumption~A2 in Table~\ref{tab:assumptions} flags the one
hand-written component in the core chain that is not verified: a
wrong-offset bug in its fixed-offset byte-copy glue would silently
corrupt the buffer Theorem~\ref{thm:encoding} certifies. Unlike
the variable-length parsers and comparators where gateway CVEs
live (\S\ref{sec:introduction}), this glue is straight-line,
input-independent, auditable against
Table~\ref{tab:buffer-layout}, and confined by seL4 isolation.
The seL4 dataports and \texttt{memcpy} that deliver its inputs
are trusted per A7--A9, not verified here. F*-verifying Stage~3
against the $E_{\mathrm{access}}$ specification is possible but
was not undertaken: the project targeted seL4 from the outset,
so seL4's isolation already enforces the component boundary. A
deployment outside seL4 would require either an F*-verified glue
or an equivalent platform guarantee.

\subsection{Verified Parsing}

EverParse~\cite{ramananandro_2019_everparseverifiedsecure} generates
verified zero-copy parsers from binary message formats; EverParse
3D~\cite{swamy_2022_hardeningattacksurfaces} extends it with a
dependent-field DSL and refinement types. All prior uses validate
\emph{externally-defined data formats}: protocol messages
(miTLS~\cite{bhargavan_2016_mitlsverifyingprotocol}), file formats,
attested boot (DICE*~\cite{tao_2021_diceformallyverified}), and
certificates (ASN1*~\cite{ni_2023_asn1provablycorrect}, the first
formal proof of DER canonicity). PulseParse~\cite{ramananandro_2025_secureparsingserializing}
uses separation logic for non-malleable recursive formats (CBOR,
CDDL, COSE). Its copy-writer combinators could in principle
eliminate A2, but require authoring specifications directly in
Pulse with manual proof obligations---re-introducing the
verification burden that our 3D-based approach is designed to
eliminate from the policy-author's workflow. DICE* and our
work both push on EverParse's envelope but from opposite sides: Tao
et al.\ \emph{extend} LowParse with backward serialisers so ASN.1
DER's variable-length TLV can be written length-after-value, a
change that stays inside LowParse's binary-grammar model; we keep
EverParse unchanged and place a verified F* preprocessor
(\S\ref{sec:reference-deployment}) in front of it, because HTTP is a
text protocol outside any binary-grammar extension. In every case the 3D or CDDL
schema defines data \emph{structure}, and authorisation---if
present---is left to downstream code. We instead use the format
specification to encode the security policy itself: a message that
parses successfully is, by construction, an authorised request.

\subsection{Access Control Languages and Policy Verification}

Prior work verifies either specific policy instances or
unverified enforcement engines.
Hughes and Bultan~\cite{hughes_2008_automatedverificationaccess}
translate XACML (eXtensible Access Control Markup Language) policies to
Boolean formulas for SAT-based containment checks; Slaymaker et
al.~\cite{slaymaker_2010_formalisingvalidatingrbactoxacml} formalise
the RBAC-to-XACML translation in Z notation; Masi et
al.~\cite{masi_2012_formalisationimplementationxacml} give a
formal semantics for XACML with an unverified Java engine.
All three verify properties of \emph{instances} or
\emph{translations}; the policy decision point that eventually
runs the rule is trusted software. XACML~\cite{ferraiolo_2016_extensibleaccesscontrol}
and OPA~\cite{cloudnativecomputingfoundation_2021_openpolicyagent}
provide expressive languages but trust the enforcement runtime;
our approach eliminates the interpreter and replaces it with
machine-generated, machine-checked C.

\paragraph{Cedar.}
Cedar~\cite{cutler_2024_cedarnewlanguage} pursues the same goal of
verified access-control enforcement and is the closest prior work.
It verifies policy \emph{semantics} against a Lean model and connects
that model to a hand-written Rust \emph{engine} by differential random
testing. Neither approach eliminates trust; both relocate it. The
difference is structural: because we generate the enforcement code from
the specification rather than testing an engine against it, the
specification-to-code correspondence is discharged by proof, not
sampled by tests, and deny-by-default follows from the encoding
(Corollary~\ref{cor:structural-deny}) rather than being an engine
behaviour that testing must confirm. The tradeoff is real: Cedar's
richer language admits greater expressiveness with a general-purpose
engine; our narrower fragment buys a provable correspondence at the
cost of the expressiveness limits in \S\ref{sec:evaluation}.

\subsection{Verified Isolation and Composition}

seL4~\cite{klein_2009_sel4formalverification} provides the first
complete formal verification of a general-purpose OS microkernel;
Murray et al.~\cite{t.murray_2013_sel4generalpurpose} extended seL4
with information flow enforcement. We use seL4 as the isolation
substrate (\S\ref{sec:reference-deployment}). CAmkES~\cite{kuz_2007_camkescomponentmodel}
provides \emph{architectural isolation}: components interact only
through declared interfaces, but the ADL describes system topology,
not content-level access control. Our parsers fill that gap.

\section{Conclusion}
\label{sec:conclusion}

We have shown that access control enforcement can be reduced to
verified data format validation. Theorem~\ref{thm:encoding}
establishes that the generated validator accepts a request iff the
policy decision function accepts it, for every policy and every
request. A single verified binary enforces every policy
expressible in the language; policy updates are changes to runtime
bytes the proof already covers. The nine trust assumptions are
explicit in Table~\ref{tab:assumptions}: A1--A6 bound the theorem
and A7--A9 bound the seL4 deployment. The approach is supported
by a production test suite, included in the artifact
(Table~\ref{tab:test-inventory}).

\paragraph{Future work.}
The encoding methodology generalises beyond the bounded language
presented here. Near-term extensions include parameterised
constraints for wildcard path matching, additional universal CNF
clauses (body size, content type), and richer role-based features
such as partial-order role hierarchies, multi-role sessions, and
separation of duty constraints, moving toward full NIST
RBAC~\cite{ferraiolo_2001_proposedniststandard} coverage.
Longer-term, any request-response protocol with fixed-offset
security-relevant fields is amenable to the same
verify-once-enforce-all approach.

\bibliographystyle{ACM-Reference-Format}
\bibliography{bibliography}

\clearpage
\appendix

\section{Open Science}

The artifact accompanying this paper includes: the EverParse 3D
specification (\texttt{RbacPolicy.3d}, 245~lines), the F* extractor
modules (8~modules, 1{,}264~lines), the generated verified C code, the
CAmkES system configuration, the full test suite (391~tests across
5~categories as described in Table~\ref{tab:test-inventory}), build
scripts, and QEMU integration scripts for end-to-end testing.

The artifact is available at:
\begin{center}
\url{https://github.com/spanwich/http-gateway-artifact}
\end{center}

The artifact supports reproducing all results in \S\ref{sec:evaluation},
including Tables~\ref{tab:monolithic-scaling}--\ref{tab:experiments} and
the QEMU end-to-end tests. The repository contains a
\texttt{Dockerfile} and \texttt{run-tests.sh} script that build the
seL4/CAmkES image, launch QEMU, and execute all 7~integration tests
automatically.

seL4 and EverParse are open-source dependencies available from their
upstream repositories: \url{https://github.com/seL4} and
\url{https://github.com/project-everest/everparse}.

\section{Ethical Considerations}

All vulnerability analysis in this paper uses exclusively publicly
available CVE disclosures from the NIST National Vulnerability Database
(CVE-2024-0012, CVE-2025-0108, CVE-2025-20362, CVE-2019-11816). No
reverse engineering, exploitation, or vulnerability discovery was
performed. The work is entirely defensive: the contribution is a
verified enforcement mechanism, not an attack technique. The OPNsense
analysis (\S\ref{sec:opnsense-coverage}) uses the publicly available
\texttt{ACL.xml} files from the open-source OPNsense~19.1.7 release
(BSD-2-Clause).

\section{Generative AI Usage}

This paper was prepared with the assistance of Claude (Anthropic). The
tool was used in four capacities: (1)~as an interactive research tool
for exploring and refining the encoding methodology and proof structure;
(2)~for assistance in drafting and revising paper text, which the
authors reviewed, restructured, and validated against the formal
artefacts; (3)~for editorial improvements to grammar and clarity; and
(4)~for assistance in developing F* modules, the 3D specification,
test harnesses, the CAmkES ADL system configuration, and the e1000
network driver. The CAmkES ADL defines the 8-component isolation
topology that seL4 enforces (\S\ref{sec:architecture}); its correctness
is Trust Assumption~A9. The e1000 driver is unverified application code
confined by that isolation and is not part of the verified trust chain.

All AI-assisted code was validated by the same toolchain that
constitutes the paper's contribution: the F* type checker and Z3 SMT
solver independently verify the generated code for memory safety,
functional correctness, and termination. The EverParse compiler rejects
any 3D specification that fails Z3 proof discharge. The 391-test suite
(\S\ref{sec:empirical-validation}) confirms behavioural correctness of
the full pipeline. No formal claim in this paper rests on unvalidated
AI-generated content; every technical claim is either machine-checked
(Theorem~\ref{thm:encoding}) or confirmed by reproducible experiments
(Table~\ref{tab:experiments}).

All authors accept full responsibility for the accuracy, originality,
and integrity of this work.

\end{document}